\newtheorem{theorem}{Theorem}
\newtheorem{lemma}[theorem]{Lemma}
\newtheorem{observation}{Observation}
\newtheorem{example}{Example}[section]
\newcommand{\I}{\mathsf{I}} 
\newcommand{\cori}{\mathsf{Cor(I)}} 
\newcommand{\common}{\mathsf{common}} 
\newcommand{\dist}{\mathsf{dist}}
\newcommand{\strategiesa}{\mathsf{S_L}}
\newcommand{\strategiesb}{\mathsf{S_F}}
\newcommand{\strategiesl}{\mathsf{S_L}}
\newcommand{\strategiesf}{\mathsf{S_F}}
\newcommand{\payoffa}{\mathsf{U_L}}
\newcommand{\payoffb}{\mathsf{U_F}}
\newcommand{\payoffl}{\mathsf{U_L}}
\newcommand{\payofff}{\mathsf{U_F}}
\newcommand{\Le}{\mathsf{L}}
\newcommand{\Fo}{\mathsf{F}}
\newcommand{\opti}{\mathsf{Opt}(\I)}
\newcommand{\vali}{\mathsf{Val}(\I)}
\newcommand{\poly}{\mathrm{poly}}
\newcommand{\mgame}{\textsc{Permuted Matching}}
\newcommand{\tgame}{\textsc{Incentive Games}}
\newcommand{\matchingproblem}{$\pi$\textsc{-transformation-identical-matching}}
\newcommand{\petrankhardness}{Petrank}
\newcommand{\threedM}{\textsc{Maximum 3D Matching}}
\DeclareMathOperator{\E}{\mathbb{E}}
\definecolor{bostonuniversityred}{rgb}{0.8, 0.0, 0.0}
\newcounter{theo}[section] \setcounter{theo}{0}
\begin{document}
	
	\title{Computing Stackelberg Equilibria of Large General-Sum Games\footnote{This work was supported in part by the National Science Foundation under grants CCF-1733556 and CCF-181501, NSF CAREER award CCF-1053605,  NSF BIGDATA grant IIS-1546108, NSF AF:Medium grant CCF-1161365, DARPA GRAPHS/AFOSR grant FA9550-12-1-0423, and another DARPA SIMPLEX grant.}}
\author{
	Avrim Blum\thanks{Toyota Technological Institute, Chicago, IL 60637, avrim@ttic.edu}
	\and Nika Haghtalab\thanks{Microsoft Research, Cambridge, MA 02142, nika.haghtalab@microsoft.com}
	\and MohammadTaghi Hajiaghayi\thanks{University of Maryland, College Park, MD 20742, \{hajiagha,saeedrez\}@cs.umd.edu}
	\and Saeed Seddighin\footnotemark[4]
}
	\date{}
\maketitle
\begin{abstract}
We study the computational complexity of finding Stackelberg Equilibria in general-sum games, where the set of pure strategies of the leader and the followers are exponentially large in a natrual representation of the problem.


In \emph{zero-sum} games, the notion of a Stackelberg equilibrium coincides with the notion of a \emph{Nash Equilibrium}~\cite{korzhyk2011stackelberg}.
Finding these equilibrium concepts in zero-sum games can be efficiently done  when the players have polynomially many pure strategies or when (in additional to some structural properties) a best-response oracle is available~\cite{ahmadinejad2016duels, DHL+17, KV05}. Despite such advancements  in the case of zero-sum games, little is known for general-sum games.

In light of the above, we examine the computational complexity of computing a Stackelberg equilibrium in large general-sum games. 
We show that while there are natural large general-sum games where the Stackelberg Equilibria can be computed efficiently if the Nash equilibrium  in its zero-sum form could be computed efficiently, in general, structural properties that allow for efficient computation of Nash equilibrium in zero-sum games are not sufficient for computing Stackelberg equilibria in general-sum games.

\end{abstract}

\section{Introduction}
\allowdisplaybreaks

Recent years have witnessed significant interest in Stackelberg games and their equilibria.
A \emph{Stackelberg  game} models an interaction between two players, a \emph{leader} and a \emph{follower}, where the leader's goal is to \emph{commit} to a randomized strategy that yields the highest utility, given that the follower responds by choosing an action that is best for itself. Such a pair of strategies is called a \emph{Stackelberg equilibrium (SE)}.
The interest in these games is driven, in part, by their applications to security~\citep{tambe2011security} and their adoption by major security agencies such as the US Coast Guard, the Federal Air Marshals Service, and the Los Angeles Airport Police.

Standard approaches for finding a Stackelberg equilibrium, such as the Multiple LPs approach of \citet{conitzer2006computing}, run in time polynomial in the number of pure strategies of the leader and follower.
As Stackelberg games and their applications have become more prevalent, they are increasingly used to model complex scenarios where one or both players'  strategy sets are \emph{exponentially large in a natural representation of the problem}, in which case existing approaches are not computationally feasible.
In this work, we consider such ``large'' games and  ask whether there are  computationally efficient algorithms for finding their Stackelberg equilibria.

Of course, such algorithms cannot exist without some assumptions on the problem structure. Here, we review the common assumptions and approaches for computing minimax-optimal solutions in large {\em zero-sum} games, where minimax strategies, Nash equilibria, and Stackelberg equilibria all coincide.
Computing these equilibrium concepts in $2$-player zero-sum games has received significant attention~\citep{immorlica2011dueling,ahmadinejad2016duels,behnezhad2016faster,garg2011bilinear,xu2016mysteries,wang2017security,DHL+17,von1945theory,hannan1957approximation, freund1995desicion,littlestone1994weighted,conitzer2006computing}.
For large zero-sum games, two structural assumptions that have proven useful in computing a Nash equilibrium are the ability to efficiently optimize a linear function over the strategy space of each player~\citep{ahmadinejad2016duels} and the ability to compute the best-response of each player against a mixed strategy of the other combined with a decomposibility property of the action set~\citep{DHL+17}.

In general-sum games, however, Stackelberg, Nash, and Minimiax equilibria diverge (See Appendix~\ref{app:example} for an illustration).
In general-sum games the leader can benefit from committing to a mixed strategy and obtain a more favorable Stackelberg equilibrium than any Nash equilibrium. 
From the algorithmic perspective, a Stackelberg equilibrium in a general-sum game can be computed efficiently when the game is small.
That is, there are algorithms, such as the Multiple LPs approach of~\citet{conitzer2006computing}, that run in time $\poly(|\strategiesl|, |\strategiesf|)$ where  $\strategiesl$ and $\strategiesf$ are the set of pure strategies of the leader and follower, respectively.
While this method is an efficient approach for computing a Stackelberg equilibrium in small games, it become computationally inefficient in many natural scenarios where the set of actions of the leader or follower is exponential in a natural representation of the game. Examples of such settings include games inspired by applications to security, where either the actions of the leader or the follower represent sets of edges in a graph. As opposed to the zero-sum case for which existence of certain structural properties are known to lead to efficient computation of the equilibrium  concepts, computation of Stackelberg equilibrium in large general-sum games has remained mostly unexplored.

\subsection{Our Results and Contributions} 

In light of the above, we examine the computational complexity of computing Stackelberg equilibria in large general-sum games. Specifically, we consider two classes of general-sum games, both of which demonstrate structural properties that under the zero-sum assumption would lead to efficient algorithms for computing the minimax optimal strategies. 
For the first class of games, we give an efficient  algorithm for computing a Stackelberg Equilibrium. In the second class of games, we show that even approximating the Stackelberg equilibrium is NP-Hard.  This drives home the main message of this work, that is
\emph{while there are natural large general-sum games where the Stackelberg Equilibria can be computed efficiently if the Nash equilibrium  in its zero-sum form could be computed efficiently, in general, structural properties that allow for efficient computation of Nash equilibrium in zero-sum games are not sufficient for computing Stackelberg equilibria in general-sum games.}

In more details, the two classes of games we work with are as follow.

\paragraph{\tgame}
In Section~\ref{sec:tax}, we introduce a class of games called \tgame. In these games, the actions of the leader can be described as two-part actions, the first part of the action is an element of a set and the second part of the action is a set of incentives to the follower for playing certain actions.

As a motivating example consider a taxation scenario. In this setting, a government agency (e.g., IRS) takes the role of the leader and a taxpayer is the follower.  A number of investments, indicated by the set $E$, are available to the taxpayer. Each investment $e$ has a return of $c_e$ to the taxpayer. The taxpayer invests in a package of investments $S\subseteq E$ that has the highest net payoff. 
The government agency is interested in taxing these investments in order to maximize the tax revenue. To do so, the agency allocates $1$ unit of taxes\footnote{More generally instead of $1$ unit of tax one can consider any other fixed upperbound for the total amount of taxes used by the system. Note that any modern tax system is design with such an upperbound in mind to ensure the welfare of society and avoid  financial or social crises caused by excessive taxes.} between these investments. 
There are two types of taxation mechanisms. First is taxing an individual investment $e$ by some amount $x_e$. The second is to provide tax relief $v_S$ for a package of options the taxpayer has invested in. Examples of the second type of taxation mechanism include United States federal residential renewable  energy tax credit that offers a tax break to individuals who have invested in home electric power storage, e.g., batteries, and home-generated renewable energy, e.g., solar panels, but no tax break to those who have invested in the former without the latter~\citep{solar, solarexplain}. The tax revenue and the net payoff the taxpayer respectively receive from individual taxes $x_e$ and combinatorial tax reliefs  $v_S$ when the taxpayer invests in investments $S$ are 
$\sum_{e\in S} x_e - v_S$ and 
$ \sum_{e\in S} (-x_e + c_e) + v_S.$
It is not hard to see that these tax breaks play an essential role in the design of tax systems. Not only they increase the total tax revenue obtainable by a tax system (See Example~\ref{eg:commit} for an illustration) but  they can also be used to incentivize the taxpayers to take actions that are more beneficial to the government.

More generally, we consider Stackelberg games and  we consider a family of sets $\S\subseteq \{0,1\}^{E}$ and leader and follower element payoffs, $C_e$ and $c_e$, respectively, for all $e \in E$.
A pure strategy of the leader is to choose $e\in E$  and a vector of \emph{incentives} $\vec v \in [0,1]^{|\S|}$, such that $\|\vec v\|_0 \in \mathrm{poly}(|E|)$. \footnote{The sparsity requirement is such that the leader can communicate its strategy to the follower efficiently.} A pure strategy of the follower is to choose one set $S\in \S$. 
The payoff of the leader and follower are defined, respectively, by 
\begin{align*}
\payoffl( (e, \vec v),  S) &= 1_{e\in S} - v_S + C_e,\\
\payofff((e, \vec v), S)   &=  - 1_{e\in S} + v_S + \sum_{e'\in S} c_{e'},
\end{align*} 
that is, the players receive non-zero-sum utilities from their individual choices, i.e., $C_e$ and $\sum_{e\in S} c_{e'}$, and zero-sum utilities from choosing actions that intersect, i.e., $\pm 1_{e\in S}$, and from the incentives provided on the followers actions sets, i.e., $\pm v_S$.

We first note that when $c_e$ and $C_e$ are set to $0$ for all $e\in E$, this game is zero-sum and can be efficiently solved when each player can compute its best-response to any choice of mixed strategy of the other player, i.e., optimize a linear function over the strategy space of the other player using existing results~\citep{ahmadinejad2016duels, DHL+17, KV05}. 

When $c_e$ and $C_e$ are non-zero, we show that the leader can obtain a higher payoff equilibrium if it could make additional commitments in the form of incentives for the follower, i.e., can play non-zero $\vec v$.
%
An interesting aspect of this game is that it is derived from a simple Stackelberg game model (where $\vec v = \vec 0$) by adding zero-sum payoffs that \emph{only benefit the follower}. Yet, the leader's payoff in the Stackelberg equilibrium of the new game is much higher than its payoff in the original game.
 Moreover, as we show in Theorem~\ref{thm:main-discount} there is a polynomial time algorithm for finding the Stackelberg equilibrium of such games  when the leader can optimize a linear function over the actions of the follower, which is a similar condition to the ones used for computing Stackelberg equilibria in large zero-sum games~\citep{ahmadinejad2016duels, DHL+17, KV05}.
%
%

\paragraph{\mgame\ \textsc{Game}}
In Section~\ref{section:apx}, 
we introduce a non-zero-sum game called \mgame. In this game, there is a graph $G = (V, E)$ and a permutation $\pi:E \rightarrow E$. The set of pure strategies of the leader and follower is the set of all matchings in $G$. The goal of the leader is to maximize the intersection of its matching with the $\pi$-transformation of the matching of the follower. On the other hand, the goal of the follower is to maximize the intersection of the two matchings, with no regards to $\pi$. More formally, for $S\subseteq E$ we define $\pi(S) = \{e \in S| \pi(e)\}$. Then the utility of the leader and follower are defined, respectively, by 
\begin{equation*}
\payoffl(M_1,M_2) = |M_1 \cap \pi(M_2)| \qquad \qquad \payofff(M_1,M_2) = |M_1 \cap M_2|.
\end{equation*} 


It is not hard to see that, in this game,
the problem of finding a best response for a player reduces to computing maximum weighted matching of $G$ and can be solved in polynomial time. This would have been sufficient for getting a polynomial time algorithm for finding a Stackelberg equilibrium had the game been a zero-sum~\citep{ahmadinejad2016duels, DHL+17, KV05}. In a sharp contrast, however, we show that computing a Stackelberg equilibrium of this general-sum game is APX-hard, even though, we can compute player's best-response efficiently.

We obtain this hardness result via two reductions. First, we define the following computational problem:
\begin{quote}
\matchingproblem: Given a graph $G$ and a mapping $\pi: E(G) \rightarrow E(G)$ over the edges of $G$,  find a matching $M$ of $G$ that maximizes $|M \cap \pi(M)|$. 
\end{quote}
We next show that  computing  an approximate Stackelberg equilibrium of the \mgame\ game is at least as hard as computing an approximate solution for the \matchingproblem\enspace problem. The crux of the argument is that if in an instance of the \matchingproblem\enspace problem there exists a matching which is almost identical to its $\pi$-transformation, then a Stackelberg equilibrium of the \mgame\ game is closely related to that matching. Thus, any solution for the \mgame\ game can be turned into a solution for \matchingproblem\enspace with almost the same quality. In the second step, we reduce the \matchingproblem\enspace problem to the \threedM\ problem, which  is known to be APX-hard~\citep{petrank1994hardness}.

We note that our results strengthen the existing hardness results of \citet{letchford2010computing,licatcher} that showed that computing Stackelberg equilibrium is \emph{NP-hard}\footnote{Interestingly, it is not hard to show that player best-response can also be computed efficiently in the games used by \citet{letchford2010computing,licatcher}, although this was not central to their results.}. Our APX-hardness result shows that one cannot even approximate the Stackelberg equilibria of large games within \emph{an arbitrary constant factor}, even when best-response can be efficiently computed.


\subsection{Related Work}
There is an extensive body of work investigating the complexity of solving Security games, which is a special case of computing Stackelberg equilibria (see \textit{e.g.} ~\cite{tambe2011security,behnezhad2017polynomial,xu2014solving,basilico2009leader,letchford2011computing,xu2016mysteries}).

\paragraph{zero-sum games}
Several algorithms have been proposed for finding the Stackelberg equilibria of a special case of security games called the spatio-temporal security games \citep{behnezhad2017polynomial,xu2014solving}. These games are zero-sum by definition, where Stackelberg equilibria, Nash equilibria, and Minimax equilibria all coincide. In comparison, our work focuses on general-sum games.

\paragraph{Smaller general-sum games}
Several works have introduced polynomial time algorithms for computing  Stackelberg equilibria in games where \emph{only one player's} strategy set is exponentially large ~\citep{kiekintveld2009computing,xu2016mysteries}. A common approach used in this case is the Multiple LPs approach of~\citet{conitzer2006computing} that runs in $\poly(|\strategiesl|, |\strategiesf|)$. In this approach one creates a separate Linear Program for every action $y\in \strategiesf$ of the follower, where the variables represent the probability assigned to the actions of the leader, the objective maximizes the expected payoff of the leader, and the constraints assure that action $y$ is the best-response of the follower. 
This method can be implemented efficiently even when the leader's strategy set is exponentially large, e.g., when a separation oracle can be implemented efficiently. In comparison, our main computational result focus on settings where both the leader and follower have exponentially many strategies.

\paragraph{Existing hardness results}
\cite{letchford2010computing} studied the computational complexity of extensive form games and proved a closely related hardness result.

They showed that computing Stackelberg equilibrium of a game is weakly NP-hard using a reduction from Knapsack. Interestingly, one can efficiently compute player best-response in their setting. In comparison, our hardness result improves over these results by showing that Stackelberg equilibria are hard to approximate within arbitrary constant factor even when player best-response can be computed efficiently.

A number of works have investigated the relationship between the Stackelberg equilibria and Nash equilibria of security games and have shown that computing a Stackelberg equilibrium is at least as hard as computing a Nash equilibrium of general-sum games.
\cite{korzhyk2011security} studied a special class of general-sum Stackelberg Security games where any Stackelberg Equilibrium is also a Nash equilibrium. This shows that computing Stackelberg equilibria is harder than computing Nash equilibria.
\cite{licatcher} studied Bayesian Stackelberg Games, where there is additional uncertainty about the attacker and show that computing the Stackelberg equilibrium is hard, and introduce an exponential time algorithm for computing the Nash equilibria.
In comparison, our work shows that Stackelberg equilibria are hard to approximate even when players best-response is easy to compute. That is, we show a gap between the computational complexity of approximating Stackelberg equilibrium of a general-sum game and that of its corresponding zero-sum variant.

\section{Preliminaries}\label{sec:preliminaries}


Throughout this paper, we study Stackleberg equilibria of large games. Our emphesis is on two player games and therefore we denote the players by $\Le$ (leader) and $\Fo$ (follower). Let $\strategiesa$ and $\strategiesb$ be the set of actions (pure strategies) of players $\Le$ and $\Fo$. For a pair of pure strategies $x \in \strategiesa$ and $y \in \strategiesb$, we denote the payoffs of players $\Le$ and $\Fo$ by $\payoffa(x,y)$ and $\payoffb(x,y)$, respectively. Similarly, for a pair of mixed strategies $X$ and $Y$ we denote the payoffs by 
\begin{align*}
\payoffa(X,Y) &= \E_{x \sim X, y \sim Y} [\payoffa(x,y)]\\
\payoffb(X,Y) &= \E_{x \sim X, y \sim Y} [\payoffb(x,y)].
\end{align*}

In Stackelberg games, the leader commits to a (possibly mixed) strategy $X$ and plays this strategy. The follower then plays a best response against $X$, $b(X)$, according to her payoff function. Since the follower goes second its best-response is a deterministic action $b(X) = \max_{y} \payofff(X, y)$.
In case there is more than one best response for the follower, we assume she plays the one that maximizes the payoff of the leader. A pair of strategies $X$ and $y$ are in Stackelberg equilibrium if $y$ is a best response of the follower against $X$ and $X$ maximizes the payoff of the leader, subject to the follower playing a best response.

\section{\tgame}  \label{sec:tax}

In this section, we discuss a class of Stackelberg games where the
leader has the ability to make additional commitments in the form
of additional incentives to the follower. Recall that a natural scenario that can be addressed by this Stackelberg model is \emph{taxation}.
In this case the leader can set taxes on individual investments but can also provide tax breaks on bundles of investments that the  tax payer has invested in.
We first show how these additional combinatorial  incentives can improve the leader's payoff significantly and then show polynomial time algorithms for computing a Stackelberg equilibrium in this model.

Let us first recall the definition of \tgame. In this model, we consider a set of elements $E$, a family of its subsets $\S\subseteq \{0,1\}^{E}$ and rewards $C_e$ and $c_e$ for all $e \in E$.
The set of pure strategies of the leader is  $\strategiesl = E \times [0,1]^{|\S|}$. That is, each action of the leader has two parts, the first part is an element  $e\in E$ and the second part is a  vector of \emph{incentives} $\vec v \in [0,1]^{|E|}$. We assume that the leader is restricted to playing incentive vectors $\|\vec v\|_0 \in \mathrm{poly}(|E|)$. \footnote{The sparsity requirement is such that the leader can communicate its strategy to the follower efficiently.}
The follower's strategy set is $\strategiesf = \S$. The leader and follower payoffs are as follows.
\begin{align}
\payoffl( (e, \vec v),  S) = 1_{e\in S} - v_S + C_e,\text{ and } \payofff((e, \vec v), S) =  - 1_{e\in S} + v_S + \sum_{e'\in S} c_{e'},
\end{align} 

that is, the players receive non-zero-sum utilities from their individual choices, i.e., $C_e$ and $\sum_{e\in S} c_{e'}$, and zero-sum utilities from choosing actions that intersect, i.e., $\pm 1_{e\in S}$, and from the incentives provided on the followers actions sets, i.e., $\pm v_S$. For ease of exposition and by the linearity of the payoffs, we denote a mixed strategy of the leader by $(\vec x, \vec V)$, where $x_e$ is the probability with which the first part of the leader's action is $e$ and $V_S$ is the expected incentive provided on action $S$ in the second part of the leader's action. Note that in this case, the expected utilities of the leader and follower are
\begin{align}
 \payoffl( (\vec x, \vec V), S) &=  \sum_{e\in S} x_e - V_S+ \sum_{e\in E} x_e C_e  \\
  \payofff((\vec x, \vec V), S) &=  \sum_{e\in S} (-x_e + c_e) + V_S.
\end{align}

Let us first consider a variation of \tgame\ where the leader cannot  provide additional incentives to the follower, i.e, $\strategiesl = E \times \vec 0$. The only difference between these games is that \tgame\ are amended by allowing zero-sum non-negative payments $\vec v$ that benefit the follower solely. One might wonder if the commitment to make additional payments $\vec v$ to the follower can ever be beneficial to the leader.
%
%
This is exactly what we demonstrate in the next example. That is, by allowing the leader to make additional zero-sum payoffs that only benefit the follower, we can obtain Stackelberg equilibria that have much higher payoff to the leader.

\begin{example} \label{eg:commit}
Consider a graph instance in Figure~\ref{fig:example-commit}, $E$ is the set of all edges, $\S$ is the set of all $s$-$t$ paths, there are no edge payoff to the leader, i.e., $C_e = 0$ for all $e\in E$, and the edge payoff to the follower, $c_e$, are the \emph{negative} of the edges costs that are denoted below each edge. That is, this is an instance where the follower is responding by choosing a shortest path with respect to the edge weights that correspond to the probability with which the leader plays them. 
\begin{figure}[h]
        \centering
\scalebox{0.93}{
\begin{tikzpicture}[scale=0.5]
    \node[anchor=south west,inner sep=0] (image) at (0,0) {     \includegraphics[width=0.5\textwidth]{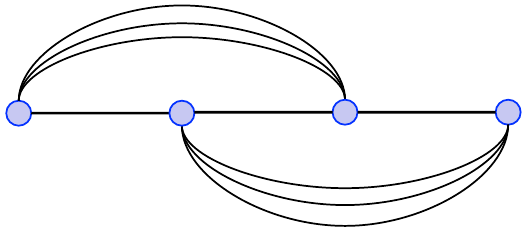}};
    \begin{scope}[x={(image.south east)},y={(image.north west)}]
	\node() at (0.035, 0.535)[]{$s$};
	\node() at (0.955, 0.54)[]{$t$};
	\node() at (0.34, 0.535)[]{$a$};
	\node() at (0.65, 0.54)[]{$b$};

        \node() at (0.2, 0.46)[]{$1$};
        \node() at (0.5, 0.46)[]{$1$};
        \node() at (0.8, 0.46)[]{$1$};
        \node() at (0.65, 0.03)[]{$2.4$};
        \node() at (0.34, 0.78)[]{$2.2$};
        \node() at (0.2, 0.6)[]{${\color{gray}0.4}$};
        \node() at (0.8, 0.6)[]{${\color{gray}0.6}$};
    \end{scope}
\end{tikzpicture}   
}  
\caption{An example where additional commitment increases the leader's payoff in the Stackelberg equilibrium. The follower's cost for each edge is denoted \emph{below} the edge, i.e., $-c_{sa}=-c_{ab}=-c_{bt}=1$ and $-c_{ab} = 2.2$ and $-c_{at} = 2.4$ for all the edges between $s$ and $b$, and between $a$ and $t$.
The mixed strategy of the leader is denoted in gray \emph{above} the edges.} 

\label{fig:example-commit}
\end{figure}        
Note that, since there are  many parallel edges $ab$ and $at$, the leaders optimal strategy (with or without additional commitment) is only supported on  edges, $sa$, $ab$, and $bt$. It is not hard to see that without any additional commitment, the Stackelberg equilibrium involves the leader playing edge $sa$ with probability $x_{sa} = 0.4$ and edge ${bt}$ with probability $x_{bt} = 0.6$, and all other edges with probability $x_e = 0$. Note that in such a mixed strategy the follower chooses path $a, b, t$ and the leader's payoff is $0.6$.
On the other hand, when the leader commits to providing additional incentive (or discount in the cost of a path) of $v_{sabt} = 0.2$ on path $s, a, b, t$, the follower best responds by choosing path $s, a, b, t$ and the leader's payoff is $0.8$.
\end{example}

Our main theorem in this section show that there is a polynomial time algorithm for finding the Stackelberg equilibrium of this modified game.


\begin{theorem} \label{thm:main-discount}
There is a polynomial time algorithm for finding a Stackelberg equilibrium if one can solve the following problem in polynomial time: Given $\vec x$ and value $W$, return  $S\in \S$, such that $\sum_{e\in S} (-x_e + c_e) \leq - W$, or return ``None'' if no such $S\in \S$ exists.\footnote{An example of a game where this linear program can be solved efficiently is the shortest path game in Example~\ref{eg:commit}.}
\end{theorem}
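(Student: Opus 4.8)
The plan is to collapse the leader's exponentially large strategy space to two tractable subproblems. Write $f_{\vec x}(S)=\sum_{e\in S}(-x_e+c_e)$ for the follower's ``base'' payoff and $M(\vec x)=\max_{S\in\S}f_{\vec x}(S)$. First I would argue structurally that the leader never gains by spreading incentives: since incentives are nonnegative, enter $\payofff$ additively, and are charged to the leader only on the set actually played, putting positive incentive on any set other than the intended target can only make that set more attractive to the follower and cost the leader. Hence WLOG $\vec V$ is supported on a single target set $S^*$, and to make $S^*$ a best response it suffices (using the tie-breaking rule in the leader's favor from the preliminaries) to set $V_{S^*}=M(\vec x)-f_{\vec x}(S^*)\ge 0$. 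Substituting into the mixed-strategy payoff $\payoffl((\vec x,\vec V),S)=\sum_{e\in S}x_e-V_S+\sum_e x_eC_e$ and simplifying, the leader's payoff when it commits to $\vec x$ and induces $S^*$ collapses to
\[ \sum_{e\in S^*} c_e \;-\; M(\vec x) \;+\; \sum_{e\in E} x_e C_e. \]
The decisive point is that the target-dependent term $\sum_{e\in S^*}c_e$ is decoupled from $\vec x$, so the optimum factors as $C^*+\max_{\vec x\in\Delta(E)}\big(\sum_e x_eC_e-M(\vec x)\big)$, where $C^*=\max_{S\in\S}\sum_{e\in S}c_e$ and $\Delta(E)$ is the simplex of first-part distributions.

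Second, I would compute each factor using only the given oracle. The constant $C^*$ together with a maximizer $S^*$ is a linear optimization over $\S$ that the oracle delivers by binary search on the threshold $W$, after the affine reparametrization $\vec x\mapsto 2\vec c-\vec x$ (which sends $f_{\vec x}\mapsto -f_{\vec x}$ and thereby turns the oracle's ``$\le -W$'' condition into the maximization direction we need); concretely one queries the oracle at $2\vec c$. The second factor is a concave maximization, since $M(\vec x)$ is a maximum of linear functions of $\vec x$ and hence convex, so I would write it as the linear program $\max\ \sum_e x_eC_e-t$ subject to $t\ge f_{\vec x}(S)$ for all $S\in\S$ and $\vec x\in\Delta(E)$, and solve it by the ellipsoid method. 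The separation problem for the exponentially many constraints is exactly ``find $S$ with $f_{\vec x}(S)>t$,'' which is precisely the given oracle under the same sign reparametrization, and this is the step at which the shortest-path setting of Example~\ref{eg:commit} becomes a genuine shortest-path computation. Reading off $(\vec x^*,t^*=M(\vec x^*))$ then yields the full leader strategy $(\vec x^*,\vec V)$ with $V_{S^*}=M(\vec x^*)-f_{\vec x^*}(S^*)$ and $S^*$ the $c$-maximal set.

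Finally I would certify optimality by a matching upper bound that holds for every leader strategy, independent of any construction: if the follower best-responds with some $S$, then comparing against a base-optimal set (whose incentive is nonnegative) forces $V_S\ge M(\vec x)-f_{\vec x}(S)$, whence the leader's payoff is at most $\sum_{e\in S}c_e-M(\vec x)+\sum_e x_eC_e\le C^*+\max_{\vec x}\big(\sum_e x_eC_e-M(\vec x)\big)$, matching the value achieved by the construction above. I expect the main obstacle to be twofold. The routine-but-careful part is turning the stated decision oracle into exact optimization and separation within polynomial bit-complexity, so that the binary-search/ellipsoid approximations still certify the optimum. The genuinely delicate part is the per-coordinate cap $v_S\le 1$ on incentives: if $M(\vec x^*)-f_{\vec x^*}(S^*)$ can exceed $1$, the clean factorization only yields the upper bound, and achievability then requires selecting the target subject to the feasibility constraint $f_{\vec x}(S^*)\ge M(\vec x)-1$, which re-couples $S^*$ with $\vec x$ and must be handled jointly rather than by the decoupled argument above.
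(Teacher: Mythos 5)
Your proposal is, at its core, the same proof as the paper's: your factorized objective $C^*+\max_{\vec x}\bigl(\sum_{e}x_eC_e-M(\vec x)\bigr)$ is exactly the paper's LP in Equation~\ref{eq:LP} (with $W=-M(\vec x)$), your lower bound $V_S\ge M(\vec x)-f_{\vec x}(S)$ on the incentive needed at the follower's best response is Lemma~\ref{lem:v_s}, and your construction --- put incentive $M(\vec x^*)-f_{\vec x^*}(S^*)$ on the $c$-maximal set $S^*$ and certify optimality by the matching chain of inequalities --- is the paper's proof of Theorem~\ref{thm:main-discount} verbatim. Where you go beyond the paper is in the oracle bookkeeping, and this is to your credit: as literally stated, the oracle returns $S$ with $f_{\vec x}(S)\le -W$, which is the direction opposite to what separation for Equation~\ref{eq:LP} requires (a violated constraint is an $S$ with $f_{\vec x}(S)>-W$); the paper silently treats the stated oracle as a separation oracle, while your reparametrization $\vec x\mapsto 2\vec c-\vec x$ (sending $f_{\vec x}\mapsto -f_{\vec x}$), and the query at $2\vec c$ to extract $S^*$ and $C^*$ by binary search, repair both this sign issue and the paper's unexplained computation of $\arg\max_{S\in\S}\sum_{e\in S}c_e$ --- modulo the mild assumption that the oracle accepts arbitrary rational input vectors $\vec x$.

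The one genuine gap in your write-up is that you leave achievability under the cap $v_S\le 1$ unresolved, declaring that it may ``re-couple'' $S^*$ with $\vec x$. It does not: the cap is automatically satisfied, and the reason is precisely the $c$-maximality of $S^*$ that you already imposed. Since $x_e\ge 0$, every $S\in\S$ has $f_{\vec x}(S)\le\sum_{e\in S}c_e\le\sum_{e\in S^*}c_e$, so $M(\vec x)\le\sum_{e\in S^*}c_e$, and therefore
\[
V_{S^*} \;=\; M(\vec x)-f_{\vec x}(S^*)\;\le\;\sum_{e\in S^*}c_e-\Bigl(\sum_{e\in S^*}c_e-\sum_{e\in S^*}x_e\Bigr)\;=\;\sum_{e\in S^*}x_e\;\le\;\|\vec x\|_1\;=\;1,
\]
while $V_{S^*}\ge 0$ because $M(\vec x)$ is the maximum over $\S$. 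So the factorized value is always attained by a feasible leader strategy (supported on the single set $S^*$, so the sparsity requirement holds as well), the decoupled argument stands with no joint optimization needed, and your matching upper bound then completes the proof. It is worth noting that the paper never verifies $V^*_{S^*}\le 1$ either; the calculation above is the implicit reason its construction is feasible, and identifying the $c$-maximal choice of $S^*$ as the step that makes the cap harmless would be a worthwhile one-line addition to either proof.
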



%
At a high level, we show that a Stackelberg equilibrium, $(\vec x^*, \vec V^*)$, can be found by finding the optimal solution $(\vec x, \vec 0)$ (with no additional incentives) that involves maximizing the followers payoff of the best response, and then providing enough incentive on one of the follower's actions. In particular, we choose to provide incentive on the specific $S\in \S$ that constitutes the best response of the  follower to the mixed strategy $(\vec 0, \vec 0)$.

For the first step of this proof, we consider the following LP, which can be efficiently solved by the separation oracle  given in Theorem~\ref{thm:main-discount},
\begin{equation} \label{eq:LP}
\begin{split}
 &\max_{\vec x, W} ~ W +  \sum_{e\in E} x_e C_e \\
 &\forall S\in \S,~\sum_{e\in S} (-x_e+ c_e)   \leq -W.
 \end{split}
\end{equation}
Let $\vec x^*,W^*$ be the solution to the above LP. Furthermore, let $S^* = \arg\max_{S\in \S} \sum_{e\in S} c_e$, and consider the incentive commitments $V^*_{S^*} = -  W^* - \sum_{e\in S^*} (-x_e+ c_e)$, and $V^*_{0} = 0$ for all $S\neq S^*$. That is, we provide enough incentive on set $S^*$ such that it becomes the best response for the follower.

To prove Theorem~\ref{thm:main-discount}, we first prove a lower bound on the incentive needed to make an action the best response of the follower.
\begin{lemma}\label{lem:v_s}
Let $(\vec x', \vec V')$ be any mixed strategy of the leader
and let $S' = b(\vec x', \vec V')$ be the corresponding best response of the follower.
Let $W' = - \max_{S\in \S} \sum_{e\in S} (-x'_e + c_e)$. We have,
\[  V'_{S'}  \geq -W' -  \sum_{e\in S'} (-x'_e + c_e). 
\]
\end{lemma}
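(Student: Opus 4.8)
The plan is to unpack the definition of the follower's best response and to compare the payoff of $S'$ against a single, carefully chosen competing action rather than the whole family $\S$. First I would record the optimality condition that defines $S' = b(\vec x', \vec V')$. Since the follower maximizes her payoff against the leader's committed strategy, for every $S\in \S$ we have
\[
\sum_{e\in S'}(-x'_e + c_e) + V'_{S'} \;\ge\; \sum_{e\in S}(-x'_e + c_e) + V'_{S}.
\]

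Next I would introduce the set witnessing the maximum that appears in the definition of $W'$. Let $S^\dagger \in \arg\max_{S\in\S}\sum_{e\in S}(-x'_e + c_e)$, so that by construction $\sum_{e\in S^\dagger}(-x'_e + c_e) = -W'$. Instantiating the best-response inequality at the particular choice $S = S^\dagger$ yields
\[
\sum_{e\in S'}(-x'_e + c_e) + V'_{S'} \;\ge\; -W' + V'_{S^\dagger}.
\]

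The remaining ingredient is that incentives are non-negative: each coordinate of the incentive vector lies in $[0,1]$, and $V'_{S^\dagger}$ is an expected incentive, hence $V'_{S^\dagger}\ge 0$. Discarding this non-negative term and rearranging gives exactly $V'_{S'} \ge -W' - \sum_{e\in S'}(-x'_e + c_e)$, which is the claimed bound.

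Because the statement is essentially a one-line consequence of best-response optimality combined with the non-negativity of the incentives, I do not anticipate a serious obstacle. The only subtlety requiring care is to compare $S'$ against the specific maximizer $S^\dagger$ (so that $\sum_{e\in S^\dagger}(-x'_e+c_e)$ collapses to $-W'$) rather than against an arbitrary set, and to invoke $V'_{S^\dagger}\ge 0$ so that the competing action's incentive can be dropped in the direction that preserves the inequality.
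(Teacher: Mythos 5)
Your proposal is correct and follows essentially the same route as the paper's proof: both compare the best response $S'$ against a maximizer of $\sum_{e\in S}(-x'_e+c_e)$ (your $S^\dagger$, the paper's $S''$), use non-negativity of the incentives to lower-bound that competitor's payoff by $-W'$, and rearrange. The only cosmetic difference is that you state the best-response inequality uniformly over all $S\in\S$ before instantiating it, whereas the paper invokes the comparison with $S''$ directly.
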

\begin{proof}
Let $S'' = \arg\max_{S\in \S} \sum_{e\in S} (-x'_e + c_e)$ be one of the actions of the follower that corresponds to value $-W'$. Since all incentives are non-negative,
we have that 
\[ \payofff( (\vec x', \vec V'), S'') =  \sum_{e\in S} (-x_e + c_e) + V'_{S''} \geq -W'.
\]
Therefore, $S'$ is a best response to $(\vec x', \vec V')$ only if
\[
 \sum_{e\in S'} (-x'_e + c_e) + V'_{S'} = \payofff( (\vec x', \vec V'), S') \geq  \payofff((\vec x', \vec V'), S'')  \geq -W'.
\]
This completes the proof.
\end{proof}

\begin{proof}[proof of Theorem~\ref{thm:main-discount}]
Let $(\vec x^*, W^*)$ be the solution to Equation~\ref{eq:LP}. Let $S^* =  \arg\max_{S\in \S} \sum_{e\in S} c_e$, and  let $V^*_{S^*} = -W^* - \sum_{e\in S^*} (-x^*_e+ c_e)$, and $V^*_{S} = 0$ for all $S\neq S^*$. It is clear that $b(\vec x^*, \vec V^*) = S^*$. Here, we show that $(\vec x^*, \vec V^*)$ is indeed the optimal leader strategy.

For any leader strategy $(\vec x', \vec V')$, let $S' =b(\vec x', \vec V')$ be the follower's best response. Moreover, let $W' = - \max_{S\in \S} \sum_{e\in S} (-x'_e + c_e)$. We have
\begin{align}
\payoffl((\vec x^*, \vec V^*), S^*) &=\sum_{e\in S^*} x^*_e + \sum_{e\in E} x^*_e C^*_e - V^*_{S^*} \\
& = \sum_{e\in S^*} x^*_e + \sum_{e\in E} x^*_e C^*_e + W^* + \sum_{e\in S^*} (-x^*_e+c_e) \\
& = \sum_{e\in E} x^*_e C^*_e + W^* + \sum_{e\in S^*} c_e \\
& \geq \sum_{e\in E} x'_e C'_e + W' + \sum_{e\in S'} c_e,   \label{eq:proof1}
\end{align}
where the second equation is by the definition of $V^*_{S^*}$) and the last inequality follows by the fact that $(\vec x' , W')$ form a valid solution for the LP in Equation~\ref{eq:LP}, for which $(\vec x^*, W^*)$ is the optimal solution and the fact that $S^*$ is chosen to maximize $\sum_{e\in S^*} c_e$.

Using Lemma~\ref{lem:v_s} on the value of $V'_{S'}$, we have
\begin{align}
\payoffl((\vec x', \vec V'), S') & = \sum_{e\in S'} x'_e + \sum_{e\in E} x'_e C'_e  - V'_{S'} \\
 	&\leq \sum_{e\in S'} x'_e + \sum_{e\in E} x'_e C'_e  + W' + \sum_{e\in S'} (-x'_e + c_e)\\
    & =\sum_{e\in E} x'_e C'_e + W' + \sum_{e\in S'} c_e.    \label{eq:proof2}
\end{align}
Equations~\ref{eq:proof1} and \ref{eq:proof2}  complete the proof.

\end{proof}

\section{The \mgame\ Game}\label{section:apx}

In this section, we introduce a large but structured general-sum Stackelberg game, called \mgame, and examine the computational complexity of computing its Stackelberg equilibrium. We show two sets of results for this game. In Section~\ref{section:apx-hard}, we show that this problem is APX-hard. This implies that unlike zero-sum games, finding a Stackelberg equilibrium is computationally hard even if best-response oracles are provided. In Section~\ref{sec:mgame-approx}, we give an efficient $\frac{1}{12}$ approximation for this problem.

The \mgame\ game is defined as follows. 
Consider the leader and follower, $\Le$ and $\Fo$.
Consider a multigraph $G = (V, E)$ and a one-to-one mapping (permutation) $\pi: E\rightarrow E$.  
Note that $\pi$ may take different values on parallel edges of a multi-graph. In the remainder of this section, we refer to a multi-graph $G$ as a \emph{graph}.
In \mgame, the set of pure strategies of both players is the set of all matchings in $G$. Given matchings $M_\Le$ and $M_\Fo$ played by the leader and follower, respectively, we define
\begin{equation*}
\payoffa(M_\Le,M_\Fo) = |M_\Le \cap \pi(M_\Fo)|, \text{and} \quad
\payoffb(M_\Le,M_\Fo) = |M_\Le \cap M_\Fo|,
\end{equation*} 
where for a set $S \subseteq E$, we define $\pi(S) = \{e \in S| \pi(e)\}$.
Note that $G$ and $\pi$ are fixed and known to both players.

Let us highlight an important aspect of our hardness result in advance. As the next observation shows, the strategy space of the players in \mgame, though large, is very structured. At a high level, the reward of each player is a linear function of the action of the other and each player can efficiently optimize a linear function over the strategy space of the other player, for example, each player can compute a best-response  to a mixed strategy of the other. 
\begin{observation}\label{salam1}
There is a polynomial time algorithm such that for every vector $\vec w \in [0,1]^{|E|}$ finds a strategy of the players whose corresponding representation vector $\vec v$ maximizes $\vec v \cdot \vec w$.
\end{observation}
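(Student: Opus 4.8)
The plan is to show that maximizing a linear objective $\vec v \cdot \vec w$ over the representation vectors of the players' strategies reduces to a single maximum weight matching computation, which is solvable in polynomial time (e.g., via Edmonds' blossom algorithm). First I would pin down what the ``representation vector'' $\vec v$ of a pure strategy is: since every pure strategy is a matching $M \subseteq E$, the natural representation is the indicator vector $\vec v = \mathbf{1}_M \in \{0,1\}^{|E|}$, with $v_e = 1$ iff $e \in M$. Then the linear objective becomes $\vec v \cdot \vec w = \sum_{e \in M} w_e$, i.e., the total $\vec w$-weight of the matching $M$.

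With this identification, the optimization problem ``find a strategy whose representation vector $\vec v$ maximizes $\vec v \cdot \vec w$'' is exactly ``find a matching $M$ in $G$ maximizing $\sum_{e \in M} w_e$,'' which is the maximum weight matching problem on the (multi)graph $G$ with edge weights $\vec w$. I would invoke the classical polynomial-time algorithm for maximum weight matching in general graphs to conclude that an optimal $M$, and hence an optimal representation vector $\vec v$, can be computed in polynomial time. I should remark that this works for multigraphs as well, since weighted matching algorithms handle parallel edges without modification (each parallel edge is a distinct element of $E$ with its own weight).

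The only subtlety I anticipate, and the step I would treat most carefully, is ensuring that a maximizer over the \emph{pure} strategies (matchings) also certifies the claim as stated, since representation vectors are integral indicators and the objective is linear; this is immediate because the maximum of a linear function over a finite set of points is attained at one of those points, so no convexification is needed. I would also note explicitly that because $\vec w \in [0,1]^{|E|}$ has non-negative entries, one need not worry about whether including more edges could hurt the objective beyond the matching constraint itself, though the weighted matching algorithm handles arbitrary real weights regardless. This observation then immediately yields a best-response oracle for each player: since each player's payoff is linear in the opponent's action, computing a best response to a fixed mixed strategy of the opponent amounts to choosing the appropriate weight vector $\vec w$ and solving the resulting maximum weight matching instance.
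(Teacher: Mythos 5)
Your proof is correct and takes essentially the same route as the paper: the paper's proof is a one-line sketch reducing the problem to maximum weighted matching on $G$ with edge weights $w_e$, which is exactly your reduction. Your additional care about the ``representation vector'' being the indicator vector of a matching and about parallel edges in the multigraph is a sound (and welcome) elaboration of that same argument.
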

\begin{proof}[Sketch]
This problem reduces to computing a maximum weighted matching of a graph with edge weights  $w_e$  for all $e\in E$, which can be performed efficiently~\cite{cormen2009introduction}.
\end{proof}

In a zero-sum game, existence of such a structure leads to efficient algorithms for computing the Nash or Stackelberg equilibria~\cite{ahmadinejad2016duels, DHL+17, KV05}. On the other hand, our APX-hardness result for the \mgame\ game shows that existence of this structure does not necessarily lead to efficient algorithms for computing Stackelberg equilibria in general-sum games. With this in mind, we present our hardness results next.

\subsection{Hardness of approximation}\label{section:apx-hard}

In this section, we show that it is impossible to approximate a Stackelberg equilibrium of the \mgame\  game in polynomial time within an arbitrarily small constant factor unless P=NP.

Before we proceed to the proof, we define an auxiliary problem and show a hardness result for this problem. Then, we take advantage of this hardness result and show that computing a Stackelberg equilibrium of \mgame\enspace is APX-hard. We call the intermediary problem \matchingproblem\ and define it as follows:
\begin{center}
    \noindent\framebox{\begin{minipage}{3in}
            {\centering \matchingproblem \\[0.25cm]}
            \textsc{Input}: An unweighted undirected graph $G = (V, E)$,\\
            \quad             and a permutation $\pi: E \rightarrow E$. \\[0.25cm]
            \textsc{Output:} A matching $M$ of $G$ that maximizes $|M \cap \pi(M)|$.
    \end{minipage}}
\end{center}

For an instance $\I$ of \matchingproblem, we denote by $\opti$ the optimal solution to $\I$ and refer to the value of this solution by $\vali$.

We show that $\pi$\textsc{-transformation-identical-matching} has a \textit{hard gap} at \textit{gap location $1$}. That is, it is NP-hard to decide whether for a given graph $G$ with $n$ vertices and a function $\pi$, the solution of the \matchingproblem\enspace problem is exactly equal to $n/2$ or at most $(1-\epsilon)n/2$ for some $\epsilon > 0$.

\begin{lemma}\label{lm1}
    There exists an $\epsilon > 0$ such that it is NP-hard to decide whether the solution of the \matchingproblem\enspace problem is exactly equal to $n/2$ or less than $(1-\epsilon)n/2$ where $n$ is the number of the vertices of the input graph.
\end{lemma}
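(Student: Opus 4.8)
\section*{Proof Proposal for Lemma~\ref{lm1}}

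The plan is to reduce from \threedM, which is APX-hard and in particular has a hard gap at gap location $1$: it is NP-hard to decide whether an instance of \threedM\ admits a perfect matching (covering all triples-worth of elements) or whether every matching misses at least an $\epsilon$-fraction of the ground set, for some fixed $\epsilon > 0$. I would take such a gap instance of \threedM\ with three disjoint sets $X$, $Y$, $Z$ and a collection of hyperedges (triples) $T \subseteq X \times Y \times Z$, and build from it a graph $G$ together with a permutation $\pi: E(G) \to E(G)$ so that matchings $M$ of $G$ with $|M \cap \pi(M)|$ large correspond exactly to large $3$-dimensional matchings of $T$.

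The key construction is a \emph{gadget per hyperedge}. For each triple $t = (x,y,z) \in T$ I would create a small constant-size collection of vertices and edges inside $G$, designed so that ``selecting'' the triple $t$ corresponds to choosing a particular pair of edges $e, e'$ in the gadget with $\pi(e) = e'$ and $\pi(e') = e$ (so they form a $\pi$-fixed pair contributing to $|M \cap \pi(M)|$), while ``not selecting'' $t$ contributes nothing or a strictly smaller amount to the objective. The permutation $\pi$ is used precisely to encode which edges must be \emph{simultaneously} present in the matching to earn credit: an edge $e$ earns a point toward $|M \cap \pi(M)|$ only when both $e \in M$ and $\pi(e) \in M$. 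I would then glue the gadgets together through shared ``element vertices,'' one per element of $X \cup Y \cup Z$, so that the matching constraint in $G$ (each vertex covered at most once) enforces the $3$-dimensional-matching constraint (each element used by at most one selected triple). The quantity $n$ (number of vertices of $G$) is linear in the instance size, so that a perfect $3$D matching yields a matching $M$ with $|M \cap \pi(M)| = n/2$ exactly, whereas any $3$D matching missing an $\epsilon'$-fraction of elements yields a value bounded below $n/2$ by a constant fraction, giving the claimed gap with an appropriately transformed $\epsilon$.

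The two directions I would need to verify are: (completeness) a perfect $3$D matching of $T$ lifts to a perfect matching $M$ of $G$ in which every selected gadget's edge pair is $\pi$-fixed, so $M = \pi(M)$ and hence $|M \cap \pi(M)| = |M| = n/2$; and (soundness) from any matching $M$ of $G$ I can extract a $3$D matching of $T$ whose size is controlled by $|M \cap \pi(M)|$, so that a value exceeding $(1-\epsilon)n/2$ forces a near-perfect $3$D matching, contradicting the \threedM\ gap. For soundness I expect I may need to argue that $M$ can be assumed, without loss of the objective, to be ``gadget-respecting'' (each gadget is either fully selected or contributes nothing), by showing that any partial or inconsistent use of a gadget can be rounded to a clean selection without decreasing $|M \cap \pi(M)|$.

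The main obstacle I anticipate is the gadget design itself: I must ensure that $\pi$ is a genuine permutation (a bijection on all of $E(G)$, including edges not meant to be selected), that edges \emph{between} gadgets cannot be combined to spuriously inflate $|M \cap \pi(M)|$ beyond what legitimate triple-selections allow, and that the gap is exactly at location $1$ (the YES case hits $n/2$ on the nose, not merely close to it). Getting the YES case to equal $n/2$ \emph{exactly} while keeping $\pi$ a fixed-point-free-enough involution on the ``credit'' edges and an arbitrary completion elsewhere is the delicate balancing act, and I would spend most of the effort verifying that no clever matching can earn credit from mismatched gadgets.
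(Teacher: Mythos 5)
Your overall strategy coincides with the paper's: a reduction from the gap version of \threedM\ (Petrank's hard gap at gap location $1$), where each triple is encoded by a pair of edges swapped by $\pi$, so that an edge earns credit toward $|M \cap \pi(M)|$ only together with its $\pi$-partner, and shared element vertices transfer the matching constraint of $G$ to the disjointness constraint on triples. However, the one thing your proposal leaves open is precisely the nontrivial content of the lemma, and the variant you sketch would fail as stated. With exactly one vertex per element of your $X \cup Y \cup Z$ (the paper's $A$, $B$, $C$), a gadget for a triple must contain two vertex-disjoint edges, hence four endpoints, but only three element vertices are available. If you patch this with private per-gadget vertices (which your ``small constant-size collection of vertices'' allows), the YES case no longer hits $n/2$ exactly: a perfect 3D matching on $3q$ elements yields credit $2q$ while $n = 3q + |T|$, where $|T|$ is the number of triples, so the value equals $n/2$ only in the degenerate case $|T| = q$. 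You correctly flagged this exactness requirement as the ``delicate balancing act,'' but you did not resolve it, and without it the gap at location $1$ is lost.

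The paper's resolution is one clean idea: duplicate the first coordinate class rather than adding private vertices. It builds a bipartite multigraph with one side $A' \cup A''$ (two copies of $A$) and the other side $B' \cup C'$, and for each triple $(a,b,c)$ inserts the two edges $(a',b')$ and $(a'',c')$ with $\pi$ swapping them. Then $\pi$ is a fixed-point-free involution on \emph{all} of $E(G')$, so your concern about completing $\pi$ on non-credit edges dissolves (there are none); and $n = 2|A| + |B| + |C| = 4q$, so a perfect 3D matching lifts to $M'$ with $M' = \pi(M')$ and $|M' \cap \pi(M')| = |M'| = 2q = n/2$ exactly. Soundness also needs none of the ``gadget-respecting rounding'' you anticipated: since $\pi$ pairs the two edges of the same triple, $M' \cap \pi(M')$ automatically decomposes into per-triple pairs, and distinct credited pairs occupy distinct $a'$ (hence distinct $a$), distinct $b'$, and distinct $c'$, so credit $2k$ immediately yields a 3D matching of size $k$. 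Cross-gadget cheating is impossible for the reason you hoped to engineer: using $a'$ for one triple and $a''$ for another earns nothing, because each credited edge requires its $\pi$-partner at the sibling copy, and two such partners would collide at a single vertex.
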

We defer the proof of \ref{lm1} to the end of this section and first show how this lemma can be used to prove the main result of this section.

\begin{theorem}\label{apxhardness} \label{thm:apxhardness}
    Computing a Stackelberg equilibrium of \textsc{Permuted Matching} is APX-hard.
\end{theorem}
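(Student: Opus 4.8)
The plan is to prove Theorem~\ref{thm:apxhardness} by a gap-preserving reduction from the \matchingproblem\ problem (using Lemma~\ref{lm1}) to the problem of computing a Stackelberg equilibrium of \mgame. Since Lemma~\ref{lm1} gives a hard gap at gap location $1$ (it is NP-hard to distinguish $\vali = n/2$ from $\vali \le (1-\epsilon)n/2$), it suffices to show that an approximate Stackelberg equilibrium of \mgame\ would let us decide this gap. The key observation is that the \mgame\ game on a graph $G$ with permutation $\pi$ is set up so that the follower wants to \emph{match} the leader's matching (maximizing $|M_\Le \cap M_\Fo|$), while the leader wants the follower's matching, after applying $\pi$, to coincide with its own (maximizing $|M_\Le \cap \pi(M_\Fo)|$). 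So I would first analyze what the leader can extract in equilibrium as a function of $\vali$.

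First I would argue the ``completeness'' direction: if there is a matching $M$ with $|M \cap \pi(M)| = n/2$ (i.e.\ $M$ is a perfect matching that is identical to its $\pi$-transformation), then the leader can commit to playing $M$ deterministically. The follower's best response is to play $M$ as well (since that maximizes $|M_\Le \cap M_\Fo|$, and by the tie-breaking convention in favor of the leader), giving the leader payoff $|M \cap \pi(M)| = n/2$. Thus when $\vali = n/2$ the leader's Stackelberg value is (at least, and in fact exactly) $n/2$. Second, the ``soundness'' direction: I would show that when $\vali \le (1-\epsilon)n/2$, \emph{every} leader commitment yields payoff bounded away from $n/2$ by a constant factor. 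The crux here is to relate the leader's equilibrium payoff to the quantity $\max_M |M \cap \pi(M)|$. The idea (as the excerpt hints) is that to get high payoff the leader is driven toward committing to a matching that is nearly fixed by $\pi$; if the follower responds with some matching $M_\Fo$, then the leader's payoff involves $\pi(M_\Fo)$ overlapping $M_\Le$, while the follower's incentive forces $M_\Fo$ to overlap $M_\Le$, and combining these couples the payoff to how close $M_\Fo$ is to being $\pi$-invariant. I would make this precise by showing a gap: high leader payoff in \mgame\ implies the existence of a matching with $|M \cap \pi(M)|$ close to $n/2$, contradicting $\vali \le (1-\epsilon)n/2$.

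The main obstacle I anticipate is the soundness analysis, because the leader may play a \emph{mixed} strategy, so I cannot simply read off a single matching. I would handle this by arguing that from a mixed leader strategy $X$ achieving payoff close to $n/2$, one can either (i) extract a single matching in the support of $X$ with large $|M \cap \pi(M)|$ via an averaging/probabilistic argument, or (ii) directly bound $\payoffl(X, b(X)) = \E_{M_\Le \sim X}[|M_\Le \cap \pi(M_\Fo)|]$ where $M_\Fo = b(X)$. The subtle point is that the follower's best response $b(X)$ maximizes $\E_{M_\Le \sim X}|M_\Le \cap M_\Fo|$, not the leader's objective, so I must show these two objectives are forced to nearly coincide when the payoff is high — intuitively, because near-perfect overlap of the follower with the leader, combined with near-perfect overlap of the leader with $\pi$ of the follower, squeezes $M_\Fo$ toward a $\pi$-invariant near-perfect matching. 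Making the constant in the gap explicit and preserved through this argument (so that an $\alpha$-approximation for some fixed constant $\alpha < 1$ for \mgame\ yields a violation of the Lemma~\ref{lm1} gap) is the technical heart; once the gap is transferred, APX-hardness follows immediately since an arbitrarily good constant-factor approximation scheme would decide the NP-hard gap problem of Lemma~\ref{lm1}.
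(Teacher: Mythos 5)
Your proposal takes essentially the same route as the paper's proof: a gap-preserving reduction from the gap version of \matchingproblem\ (Lemma~\ref{lm1}), with completeness by committing to the perfect $\pi$-invariant matching and soundness by squeezing the follower's best response toward a $\pi$-invariant near-perfect matching. The step you flag as the technical heart is carried out in the paper exactly along your option (ii): defining $\dist(x,y)=|x|+|y|-2\common(x,y)$ and using its triangle inequality, one gets $\E_{x\sim X}[\dist(x,\pi(y))]\leq \epsilon n$, hence the support of $X$ is nearly concentrated, hence $\payofff(X,y)\geq \payofff(X,X)\geq (1-5\epsilon)n/2$, and combining yields $\dist(y,\pi(y))\leq 6\epsilon n$, so the follower's pure best response $y$ itself is the witness matching with $|y\cap\pi(y)|\geq (1-13\epsilon)n/2$, which decides the gap whenever $\epsilon<\epsilon'/13$.
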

\begin{proof}
    More generally, we show that approximating a Stackelberg equilibrium of the \mgame\  game has a hard gap at gap location $1$. This immediately implies a hardness of approximation. We show this by a reduction from the \matchingproblem\enspace problem. Suppose we are given an instance $\I = \langle G, \pi\rangle$ of the $\pi$\textsc{-transformation-identical-matching} problem and wish to decide for some $\epsilon' > 0$, whether the solution of this problem achieves a value that is exactly $n/2$ or is bounded above by $(1-\epsilon')n/2$ where $n$ is the size of $G$. 
    Based on $\I$, we construct an instance $\cori$ of the \mgame\  game with the same graph $G$ and permutation $\pi$ and seek to find a Stackelberg equilibrium in this game. Note that by definition, $\vali$ is equal to $n/2$ if and only if $G$ contains a perfect matching that is identical to its $\pi$-transformation. Otherwise, $\vali$ is at most $(1-\epsilon')n/2$ and thus any matching of $G$ shares no more than $(1-\epsilon')n/2$ edges with its $\pi$-transformation.
    
    Since for small enough $\epsilon'$, it is NP-hard to distinguish the two cases (Lemma \ref{lm1}), we show that it is NP-hard to approximate a Stackelberg equilibrium of the leader in $\cori$. If $\vali=n/2$, then there exists a perfect matching in $G$ that is identical to its $\pi$-transformation. Thus, if both players play this matching in $\cori$, they both get a payoff of $n/2$. Notice that $n/2$ is the maximum possible payoff for any player in this game, therefore, such a strategy pair is a Stackelberg equilibrium. Hence, in case $\vali = n/2$, the leader achieves a payoff of $n/2$ in a Stackelberg equilibrium of the corresponding \mgame\  game. 
    
%
    Now, suppose for $\epsilon < \epsilon'/13$ we have a $1-\epsilon$ approximation solution for $\cori$. If $\vali = n/2$, then the payoff of the leader in an exact solution of $\cori$ is $n/2$ and therefore a $1-\epsilon$ approximation solution guarantees a payoff of at least  $n(1-\epsilon)/2$ for the leader. Let the strategies of the leader and follower be $X$ and $y$ in such a solution. Therefore, $\payoffl(X,y) \geq n(1-\epsilon)/2$. Notice that $X$ may be a mixed strategy, but we can assume w.l.g that $y$ is a pure strategy since there always exists a best response for the follower which is pure. Also, let $y^*$ be the $\pi$-transformation of strategy $y$. Let for two matchings $x$ and $y$, $\common(x,y)$ denote the number of edges that $x$ and $y$ have in common and define $\dist(x,y) = |x| + |y|-2\common(x,y)$. Recall that the payoff of the leader in this game can be formulated as $\mathbb{E}_{x \sim X} [\common(x,y^*)]$. Since this value is at least $n(1-\epsilon)/2$ we have:
    \begin{equation*}
    \mathbb{E}_{x \sim X} [\common(x,y^*)] = \payoffl(X,y) \geq n(1-\epsilon)/2
    \end{equation*}
    and thus
    \begin{equation}\label{fact1}
    \begin{split}
    \mathbb{E}_{x \sim X} [\dist(x,y^*)] & =  \mathbb{E}_{x \sim X} [|x| + |y^*| - 2\common(x,y^*)]\\
    & \leq  \mathbb{E}_{x \sim X} [n - 2\common(x,y^*)]\\
    & = n - 2\mathbb{E}_{x \sim X} [\common(x,y^*)]\\
    & \leq n - 2 n(1-\epsilon)/2\\
    & = n \epsilon.
    \end{split}
    \end{equation}
    Inequality \eqref{fact1} shows that $y^*$ is very similar (in expectation) to a random matching drawn from strategy $X$. This intuitively implies that pure strategies of $X$ should have a considerable amount of edges in common. It follows from the definition that for three matchings $x$, $y$, and $z$ we have $\dist(x,z) \leq \dist(x,y) + \dist(y,z)$. Therefore, we have 
    \begin{equation}\label{fact2}
    \begin{split}
    \mathbb{E}_{x \sim X, x' \sim X} [\dist(x,x')] & \leq \mathbb{E}_{x \sim X, x' \sim X} [\dist(x,y^*) + \dist(y^*,x')]\\
    & = \mathbb{E}_{x \sim X, x' \sim X} [\dist(x,y^*) + \dist(x', y^*)]\\
    & = \mathbb{E}_{x \sim X} [\dist(x,y^*)] + \mathbb{E}_{x' \sim X} [\dist(x',y^*)]\\
    & = 2 \mathbb{E}_{x \sim X} [\dist(x,y^*)]\\
    & \leq 2\epsilon n
    \end{split}
    \end{equation}
    Recall that the payoff of the follower is determined by the number of edges his matching shares with that of the leader. Moreover, since $\payoffl(X,y) \geq n(1-\epsilon)/2$, this implies that $\mathbb{E}_{x \sim X}[|x|] \geq n(1-\epsilon)/2$. What Inequality \eqref{fact2} implies is that if the follower plays $X$ instead of $y$, he gets a payoff of at least $\mathbb{E}_{x \sim X}|x| - 2\epsilon n \geq (1-5\epsilon)n/2$ against $X$. In other words $\payofff(X,X) \geq (1-5\epsilon)n/2$. Since $y$ is a best response of the follower against the leader's strategy, we have $\payofff(X,y) \geq \payofff(X,X) \geq (1-5\epsilon)n/2$ and thus
    \begin{align*}
        \mathbb{E}_{x \sim X} [\common(x,y)]  &= \payofff(X,y) \\
          & \geq \payofff(X,X)  \\
          &=  \mathbb{E}_{x \sim X, x' \sim X} [\common(x,x')] \\
          & \geq (1-5\epsilon)n/2.
    \end{align*}
    Hence
    \begin{equation}\label{fact3}
    \begin{split}
    \mathbb{E}_{x \sim X} [\dist(x,y)] & = \mathbb{E}_{x \sim X} [|x| + |y| - 2\common(x,y)] \\
    & \leq \mathbb{E}_{x \sim X} [n - 2\common(x,y)]\\
    & = n-2\mathbb{E}_{x \sim X} [\common(x,y)]\\
    & \leq n-2(1-5\epsilon)n/2\\
    & \leq 5\epsilon n.
    \end{split}
    \end{equation}
    Combining Inequalities \eqref{fact1} and \eqref{fact3} yields
    \begin{equation*}
    \dist(y, y^*) \leq \mathbb{E}_{x \sim X} [\dist(x,y)] + \mathbb{E}_{x \sim X} [\dist(x,y^*)] \leq 6\epsilon n.
    \end{equation*}
    Therefore, we have $\common(y, y^*) \geq |y^*|- 6\epsilon n$ and since $|y^*| \geq (1-\epsilon)n/2$ we have $\common(y, y^*) \geq (1-13\epsilon)n/2  >  (1-\epsilon')n/2$. If $\vali \neq n/2$, then $\vali$ is bounded by $(1-\epsilon')n/2$. Therefore, $\common(y, y^*) > (1-\epsilon')n/2$ holds if and only if $\vali=n/2$. Thus, an approximation solution for $\cori$ within a factor $(1-\epsilon) > (1-\epsilon'/13)$ can be used to decide if the solution of $\I$ is $n/2$ or bounded by $(1-\epsilon')n/2$. This implies a hard gap for the \matchingproblem\enspace problem at gap location $1$.
\end{proof}

All that remains is to prove the statement of Lemma \ref{lm1} and that completes the proof Theorem \ref{apxhardness}.

\begin{proof}[proof of Lemma \ref{lm1}]
    We show this lemma by a reduction from the \threedM\enspace problem. In the \threedM\enspace problem, we are given a hypergraph $G$ whose vertices are divided into three parts $A$, $B$, and $C$. Every hyper-edge of $G$ is a triple $(a,b,c)$ of the vertices such that $a \in A$, $b \in B$, and $c \in C$ hold. A matching in this graph is a subset of the hyper-edges that do not share any vertices. The goal of the problem is to find a matching with the maximum number of hyper-edges. \petrankhardness~\cite{petrank1994hardness} showed that the \threedM\enspace problem has a hard gap at gap location $1$.

\begin{figure*}[h]
        \centering
\begin{tikzpicture}
    \node[anchor=south west,inner sep=0] (image) at (0,0) {     \includegraphics[width=0.9\textwidth]{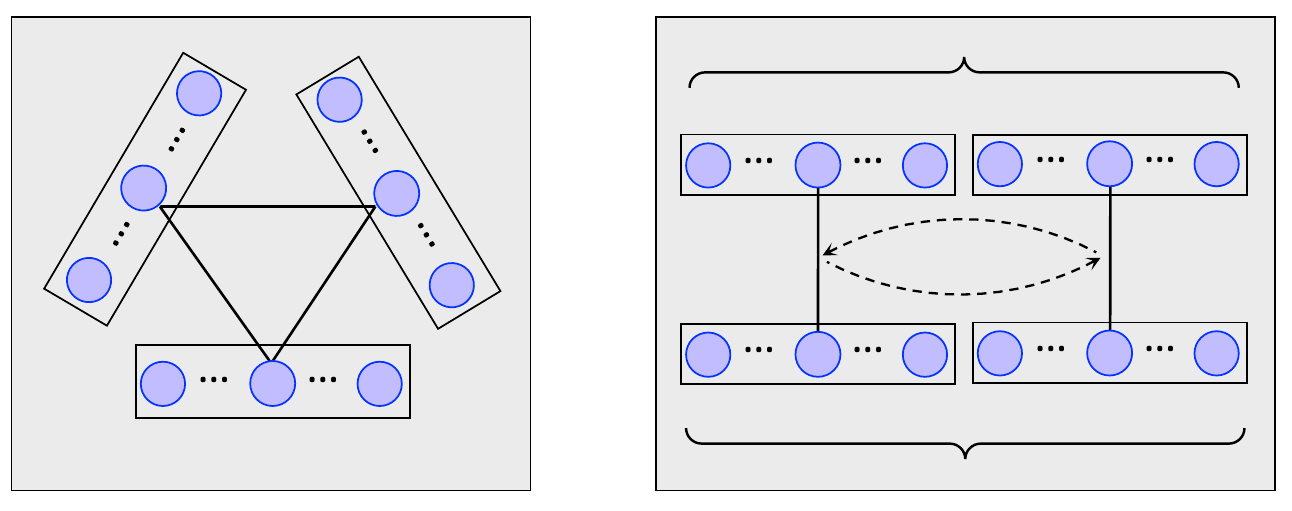}};
    \begin{scope}[x={(image.south east)},y={(image.north west)}]
	\node() at (0.113, 0.63)[]{$b$};
	\node() at (0.075, 0.7)[]{$B$};
	\node() at (0.306, 0.62)[]{$c$};
	\node() at (0.345, 0.7)[]{$C$};
	\node() at (0.212, 0.25)[]{$a$};
	\node() at (0.213, 0.12)[]{$A$};
	\node() at (0.213, 0.12)[]{$A$};

\draw [-to,thick,snake=snake,segment amplitude=.4mm,
segment length=2mm,line after snake=1mm,color=red](0.4,0.5) -- (0.52, 0.5);
	\node() at (0.395, 0.5)[]{$\I$}; 
	\node() at (0.555, 0.5)[]{$\cori$}; 
	\node() at (0.745, 0.5)[]{\small $\pi$-transformation};
	
	\node() at (0.745, 0.925)[]{$X$};
	\node() at (0.75, 0.07)[]{$Y$};
	\node() at (0.64, 0.18)[]{$B'$};
	\node() at (0.865, 0.18)[]{$C'$};
	
	\node() at (0.64, 0.81)[]{$A'$};
	\node() at (0.865, 0.81)[]{$A''$};
	
	\node() at (0.635, 0.3)[]{$b'$};
	\node() at (0.635, 0.68)[]{$a'$};
		
	\node() at (0.86, 0.31)[]{$c'$};
	\node() at (0.86, 0.68)[]{$a''$};
    \end{scope}

\end{tikzpicture}\caption{The figure on the left shows an instance $\I$ of the  \threedM\enspace problem and the figure on the right shows the corresponding instance $\cori$ of the \matchingproblem\enspace problem. Solid segments show the edges of the graphs and dashed segments specify the corresponding $\pi$-transformations of the edges.}\label{reduction1}
\end{figure*}

    We show via a reduction that \matchingproblem\enspace is harder than the \threedM\enspace problem. This implies a similar hardness result for the \matchingproblem\enspace problem. To this end, suppose we are given an instance $\I$ of the \threedM\enspace problem. Let the hyper-graph of this instance be $G$ and its vertices lie in three parts $A$, $B$, and $C$ such that every hyper-edge of the graph contains a vertex of each part. We construct a bipartite graph $G'$ and a function $\pi$ based on $G$ as follows:
    \begin{itemize}
     \setlength\itemsep{2pt}
        \item $G'$ contains two independent parts $X$ and $Y$, each of which contains an endpoint of every edge of the graph.
        \item $X = A' \cup A''$ where  $A'$ and $A''$ are two copies of $A$. That is, for every vertex $a \in A$, we put two vertices $a'$ and $a''$ in $X$ (in $A'$ and $A''$ respectively).
        \item $Y = B' \cup C'$ where $B'$ is a copy of $B$ and $C'$ is a copy of $C$. That is, for every vertex $b \in B$ and $c\in C$ we put a vertex $b'$ and $c'$ in $Y$ (in part $B'$ and $C'$, respectively).
       \item For every edge $(a,b,c)$ of $G$, we put two edges $(a',b')$ and $(a'',c')$ in $G'$. Moreover, we set $\pi((a',b')) = (a'',c')$ and $\pi((a'',c')) = (a',b')$. Note that, there may be multiple edges between two vertices of multi-graph $G$ with different $\pi$ values.
    \end{itemize}
    Now, we argue that for every 3-dimensional matching $M$ of $G$, there exists a matching $M'$ of $G'$ such that $|M' \cap \pi(M')| = 2|M|$ and viceversa. To this end, suppose $M$ is a 3-dimensional matching of $G$. Now, we set $M' = \{(a,b,c) \in E(G) | (a',b')\} \cup \{(a,b,c) \in E(G) | (a'',c')\}$. Since $M$ is a matching, no two edges of $M'$ share a vertex and thus $M'$ is also a matching. Moreover, for every edge in $M'$, its $\pi$ transformation is also included in $M'$ and therefore $M' = \pi(M')$. Hence $|M' \cap \pi(M')| = |M'| = 2|M|$. A similar argument shows that for any matching $M'$ of $G'$ such that $|M' \cap \pi(M')| = 2k$, there exists a matching of size $k$ in $G$. Therefore, the problem of finding a maximum 3-dimensional matching of $G$ reduces to finding a matching of $G'$ that shares the maximum number of edges with its $\pi$-transformation. Since the \threedM\enspace problem has a hard gap at gap location $1$, so does the \matchingproblem\enspace problem. Figure \ref{reduction1} describes the reduction mentioned above.
\end{proof}

\bibliographystyle{ACM-Reference-Format}	
\bibliography{sources/ne,sources/Nika_bib} 


\begin{thebibliography}{00}


\ifx \showCODEN    \undefined \def \showCODEN     #1{\unskip}     \fi
\ifx \showDOI      \undefined \def \showDOI       #1{#1}\fi
\ifx \showISBNx    \undefined \def \showISBNx     #1{\unskip}     \fi
\ifx \showISBNxiii \undefined \def \showISBNxiii  #1{\unskip}     \fi
\ifx \showISSN     \undefined \def \showISSN      #1{\unskip}     \fi
\ifx \showLCCN     \undefined \def \showLCCN      #1{\unskip}     \fi
\ifx \shownote     \undefined \def \shownote      #1{#1}          \fi
\ifx \showarticletitle \undefined \def \showarticletitle #1{#1}   \fi
\ifx \showURL      \undefined \def \showURL       {\relax}        \fi
\providecommand\bibfield[2]{#2}
\providecommand\bibinfo[2]{#2}
\providecommand\natexlab[1]{#1}
\providecommand\showeprint[2][]{arXiv:#2}

\bibitem[\protect\citeauthoryear{Ahmadinejad, Dehghani, Hajiaghay, Lucier,
  Mahini, and Seddighin}{Ahmadinejad et~al\mbox{.}}{2016}]%
        {ahmadinejad2016duels}
\bibfield{author}{\bibinfo{person}{AmirMahdi Ahmadinejad},
  \bibinfo{person}{Sina Dehghani}, \bibinfo{person}{MohammadTaghi Hajiaghay},
  \bibinfo{person}{Brendan Lucier}, \bibinfo{person}{Hamid Mahini}, {and}
  \bibinfo{person}{Saeed Seddighin}.} \bibinfo{year}{2016}\natexlab{}.
\newblock \showarticletitle{From Duels to Battlefields: Computing Equilibria of
  Blotto and Other Games}. In \bibinfo{booktitle}{{\em Thirtieth AAAI
  Conference on Artificial Intelligence}}.
\newblock


\bibitem[\protect\citeauthoryear{Basilico, Gatti, and Amigoni}{Basilico
  et~al\mbox{.}}{2009}]%
        {basilico2009leader}
\bibfield{author}{\bibinfo{person}{Nicola Basilico}, \bibinfo{person}{Nicola
  Gatti}, {and} \bibinfo{person}{Francesco Amigoni}.}
  \bibinfo{year}{2009}\natexlab{}.
\newblock \showarticletitle{Leader-follower strategies for robotic patrolling
  in environments with arbitrary topologies}. In \bibinfo{booktitle}{{\em
  Proceedings of The 8th International Conference on Autonomous Agents and
  Multiagent Systems-Volume 1}}. International Foundation for Autonomous Agents
  and Multiagent Systems, \bibinfo{pages}{57--64}.
\newblock


\bibitem[\protect\citeauthoryear{Behnezhad, Dehghani, Derakhshan, HajiAghayi,
  and Seddighin}{Behnezhad et~al\mbox{.}}{2016}]%
        {behnezhad2016faster}
\bibfield{author}{\bibinfo{person}{Soheil Behnezhad}, \bibinfo{person}{Sina
  Dehghani}, \bibinfo{person}{Mahsa Derakhshan}, \bibinfo{person}{MohammadTaghi
  HajiAghayi}, {and} \bibinfo{person}{Saeed Seddighin}.}
  \bibinfo{year}{2016}\natexlab{}.
\newblock \showarticletitle{Faster and Simpler Algorithm for Optimal Strategies
  of Blotto Game}.
\newblock \bibinfo{journal}{{\em arXiv preprint arXiv:1612.04029\/}}
  (\bibinfo{year}{2016}).
\newblock


\bibitem[\protect\citeauthoryear{Behnezhad, Derakhshan, Hajiaghayi, and
  Slivkins}{Behnezhad et~al\mbox{.}}{2017}]%
        {behnezhad2017polynomial}
\bibfield{author}{\bibinfo{person}{Soheil Behnezhad}, \bibinfo{person}{Mahsa
  Derakhshan}, \bibinfo{person}{MohammadTaghi Hajiaghayi}, {and}
  \bibinfo{person}{Aleksandrs Slivkins}.} \bibinfo{year}{2017}\natexlab{}.
\newblock \showarticletitle{A Polynomial Time Algorithm for Spatio-Temporal
  Security Games}. In \bibinfo{booktitle}{{\em Proceedings of the 2017 ACM
  Conference on Economics and Computation}}. ACM, \bibinfo{pages}{697--714}.
\newblock


\bibitem[\protect\citeauthoryear{Conitzer and Sandholm}{Conitzer and
  Sandholm}{2006}]%
        {conitzer2006computing}
\bibfield{author}{\bibinfo{person}{Vincent Conitzer} {and}
  \bibinfo{person}{Tuomas Sandholm}.} \bibinfo{year}{2006}\natexlab{}.
\newblock \showarticletitle{Computing the optimal strategy to commit to}. In
  \bibinfo{booktitle}{{\em Proceedings of the 7th ACM Conference on Economics
  and Computation (EC)}}. ACM, \bibinfo{pages}{82--90}.
\newblock


\bibitem[\protect\citeauthoryear{Cormen}{Cormen}{2009}]%
        {cormen2009introduction}
\bibfield{author}{\bibinfo{person}{Thomas~H Cormen}.}
  \bibinfo{year}{2009}\natexlab{}.
\newblock \bibinfo{booktitle}{{\em Introduction to algorithms}}.
\newblock \bibinfo{publisher}{MIT press}.
\newblock


\bibitem[\protect\citeauthoryear{Dud{\'\i}k, Haghtalab, Luo, Schapire,
  Syrgkanis, and Vaughan}{Dud{\'\i}k et~al\mbox{.}}{2017}]%
        {DHL+17}
\bibfield{author}{\bibinfo{person}{Miroslav Dud{\'\i}k}, \bibinfo{person}{Nika
  Haghtalab}, \bibinfo{person}{Haipeng Luo}, \bibinfo{person}{Robert~E
  Schapire}, \bibinfo{person}{Vasilis Syrgkanis}, {and}
  \bibinfo{person}{Jennifer~Wortman Vaughan}.} \bibinfo{year}{2017}\natexlab{}.
\newblock \showarticletitle{Oracle-Efficient Online Learning and Auction
  Design}. In \bibinfo{booktitle}{{\em Proceedings of the58thSymposium on
  Foundations of Computer Science (FOCS)}}.
\newblock


\bibitem[\protect\citeauthoryear{EnergySage}{EnergySage}{2019}]%
        {solarexplain}
\bibfield{author}{\bibinfo{person}{EnergySage}.}
  \bibinfo{year}{2019}\natexlab{}.
\newblock \bibinfo{title}{Using the solar investment tax credit for energy
  storage}.
\newblock   (\bibinfo{year}{2019}).
\newblock
\showURL{%
\url{https://www.energysage.com/solar/solar-energy-storage/energy-storage-tax-credits-incentives/}}


\bibitem[\protect\citeauthoryear{Freund and Schapire}{Freund and
  Schapire}{1995}]%
        {freund1995desicion}
\bibfield{author}{\bibinfo{person}{Yoav Freund} {and} \bibinfo{person}{Robert~E
  Schapire}.} \bibinfo{year}{1995}\natexlab{}.
\newblock \showarticletitle{A desicion-theoretic generalization of on-line
  learning and an application to boosting}. In \bibinfo{booktitle}{{\em
  European conference on computational learning theory}}. Springer,
  \bibinfo{pages}{23--37}.
\newblock


\bibitem[\protect\citeauthoryear{Garg, Jiang, and Mehta}{Garg
  et~al\mbox{.}}{2011}]%
        {garg2011bilinear}
\bibfield{author}{\bibinfo{person}{Jugal Garg}, \bibinfo{person}{Albert~Xin
  Jiang}, {and} \bibinfo{person}{Ruta Mehta}.} \bibinfo{year}{2011}\natexlab{}.
\newblock \showarticletitle{Bilinear games: Polynomial time algorithms for rank
  based subclasses}. In \bibinfo{booktitle}{{\em International Workshop on
  Internet and Network Economics}}. Springer, \bibinfo{pages}{399--407}.
\newblock


\bibitem[\protect\citeauthoryear{Hannan}{Hannan}{1957}]%
        {hannan1957approximation}
\bibfield{author}{\bibinfo{person}{James Hannan}.}
  \bibinfo{year}{1957}\natexlab{}.
\newblock \showarticletitle{Approximation to Bayes risk in repeated play}.
\newblock \bibinfo{journal}{{\em Contributions to the Theory of Games\/}}
  \bibinfo{volume}{3} (\bibinfo{year}{1957}), \bibinfo{pages}{97--139}.
\newblock


\bibitem[\protect\citeauthoryear{Immorlica, Kalai, Lucier, Moitra, Postlewaite,
  and Tennenholtz}{Immorlica et~al\mbox{.}}{2011}]%
        {immorlica2011dueling}
\bibfield{author}{\bibinfo{person}{Nicole Immorlica},
  \bibinfo{person}{Adam~Tauman Kalai}, \bibinfo{person}{Brendan Lucier},
  \bibinfo{person}{Ankur Moitra}, \bibinfo{person}{Andrew Postlewaite}, {and}
  \bibinfo{person}{Moshe Tennenholtz}.} \bibinfo{year}{2011}\natexlab{}.
\newblock \showarticletitle{Dueling algorithms}. In \bibinfo{booktitle}{{\em
  Proceedings of the forty-third annual ACM symposium on Theory of computing}}.
  ACM, \bibinfo{pages}{215--224}.
\newblock


\bibitem[\protect\citeauthoryear{Kalai and Vempala}{Kalai and Vempala}{2005}]%
        {KV05}
\bibfield{author}{\bibinfo{person}{Adam Kalai} {and} \bibinfo{person}{Santosh
  Vempala}.} \bibinfo{year}{2005}\natexlab{}.
\newblock \showarticletitle{Efficient algorithms for online decision problems}.
\newblock \bibinfo{journal}{{\it J. Comput. System Sci.}} \bibinfo{volume}{71},
  \bibinfo{number}{3} (\bibinfo{year}{2005}), \bibinfo{pages}{291--307}.
\newblock


\bibitem[\protect\citeauthoryear{Kiekintveld, Jain, Tsai, Pita,
  Ord{\'o}{\~n}ez, and Tambe}{Kiekintveld et~al\mbox{.}}{2009}]%
        {kiekintveld2009computing}
\bibfield{author}{\bibinfo{person}{Christopher Kiekintveld},
  \bibinfo{person}{Manish Jain}, \bibinfo{person}{Jason Tsai},
  \bibinfo{person}{James Pita}, \bibinfo{person}{Fernando Ord{\'o}{\~n}ez},
  {and} \bibinfo{person}{Milind Tambe}.} \bibinfo{year}{2009}\natexlab{}.
\newblock \showarticletitle{Computing optimal randomized resource allocations
  for massive security games}. In \bibinfo{booktitle}{{\em Proceedings of The
  8th International Conference on Autonomous Agents and Multiagent
  Systems-Volume 1}}. International Foundation for Autonomous Agents and
  Multiagent Systems, \bibinfo{pages}{689--696}.
\newblock


\bibitem[\protect\citeauthoryear{Korzhyk, Conitzer, and Parr}{Korzhyk
  et~al\mbox{.}}{2011a}]%
        {korzhyk2011security}
\bibfield{author}{\bibinfo{person}{Dmytro Korzhyk}, \bibinfo{person}{Vincent
  Conitzer}, {and} \bibinfo{person}{Ronald Parr}.}
  \bibinfo{year}{2011}\natexlab{a}.
\newblock \showarticletitle{Security games with multiple attacker resources}.
  In \bibinfo{booktitle}{{\em IJCAI Proceedings-International Joint Conference
  on Artificial Intelligence}}, Vol.~\bibinfo{volume}{22}.
  \bibinfo{pages}{273}.
\newblock


\bibitem[\protect\citeauthoryear{Korzhyk, Yin, Kiekintveld, Conitzer, and
  Tambe}{Korzhyk et~al\mbox{.}}{2011b}]%
        {korzhyk2011stackelberg}
\bibfield{author}{\bibinfo{person}{Dmytro Korzhyk}, \bibinfo{person}{Zhengyu
  Yin}, \bibinfo{person}{Christopher Kiekintveld}, \bibinfo{person}{Vincent
  Conitzer}, {and} \bibinfo{person}{Milind Tambe}.}
  \bibinfo{year}{2011}\natexlab{b}.
\newblock \showarticletitle{Stackelberg vs. Nash in Security Games: An Extended
  Investigation of Interchangeability, Equivalence, and Uniqueness.}
\newblock \bibinfo{journal}{{\em J. Artif. Intell. Res.(JAIR)\/}}
  \bibinfo{volume}{41} (\bibinfo{year}{2011}), \bibinfo{pages}{297--327}.
\newblock


\bibitem[\protect\citeauthoryear{Letchford and Conitzer}{Letchford and
  Conitzer}{2010}]%
        {letchford2010computing}
\bibfield{author}{\bibinfo{person}{Joshua Letchford} {and}
  \bibinfo{person}{Vincent Conitzer}.} \bibinfo{year}{2010}\natexlab{}.
\newblock \showarticletitle{Computing optimal strategies to commit to in
  extensive-form games}. In \bibinfo{booktitle}{{\em Proceedings of the 11th
  ACM conference on Electronic commerce}}. ACM, \bibinfo{pages}{83--92}.
\newblock


\bibitem[\protect\citeauthoryear{Letchford and Vorobeychik}{Letchford and
  Vorobeychik}{2011}]%
        {letchford2011computing}
\bibfield{author}{\bibinfo{person}{Joshua Letchford} {and}
  \bibinfo{person}{Yevgeniy Vorobeychik}.} \bibinfo{year}{2011}\natexlab{}.
\newblock \showarticletitle{Computing randomized security strategies in
  networked domains}. In \bibinfo{booktitle}{{\em Workshops at the Twenty-Fifth
  AAAI Conference on Artificial Intelligence}}.
\newblock


\bibitem[\protect\citeauthoryear{Li, Conitzer, and Korzhyk}{Li
  et~al\mbox{.}}{2016}]%
        {licatcher}
\bibfield{author}{\bibinfo{person}{Yuqian Li}, \bibinfo{person}{Vincent
  Conitzer}, {and} \bibinfo{person}{Dmytro Korzhyk}.}
  \bibinfo{year}{2016}\natexlab{}.
\newblock \showarticletitle{Catcher-Evader Games}. In \bibinfo{booktitle}{{\em
  Proceedings of the Twenty-Fifth International Joint Conference on Artificial
  Intelligence, {IJCAI} 2016, New York, NY, USA, 9-15 July 2016}}.
  \bibinfo{pages}{329--337}.
\newblock
\showURL{%
\url{http://www.ijcai.org/Abstract/16/054}}


\bibitem[\protect\citeauthoryear{Littlestone and Warmuth}{Littlestone and
  Warmuth}{1994}]%
        {littlestone1994weighted}
\bibfield{author}{\bibinfo{person}{Nick Littlestone} {and}
  \bibinfo{person}{Manfred~K Warmuth}.} \bibinfo{year}{1994}\natexlab{}.
\newblock \showarticletitle{The weighted majority algorithm}.
\newblock \bibinfo{journal}{{\em Information and computation\/}}
  \bibinfo{volume}{108}, \bibinfo{number}{2} (\bibinfo{year}{1994}),
  \bibinfo{pages}{212--261}.
\newblock


\bibitem[\protect\citeauthoryear{Petrank}{Petrank}{1994}]%
        {petrank1994hardness}
\bibfield{author}{\bibinfo{person}{Erez Petrank}.}
  \bibinfo{year}{1994}\natexlab{}.
\newblock \showarticletitle{The hardness of approximation: Gap location}.
\newblock \bibinfo{journal}{{\em Computational Complexity\/}}
  \bibinfo{volume}{4}, \bibinfo{number}{2} (\bibinfo{year}{1994}),
  \bibinfo{pages}{133--157}.
\newblock


\bibitem[\protect\citeauthoryear{Tambe}{Tambe}{2011}]%
        {tambe2011security}
\bibfield{author}{\bibinfo{person}{Milind Tambe}.}
  \bibinfo{year}{2011}\natexlab{}.
\newblock \bibinfo{booktitle}{{\em Security and game theory: algorithms,
  deployed systems, lessons learned}}.
\newblock \bibinfo{publisher}{Cambridge University Press}.
\newblock


\bibitem[\protect\citeauthoryear{{{U.S.} Department of Energy}}{{{U.S.}
  Department of Energy}}{2019}]%
        {solar}
\bibfield{author}{\bibinfo{person}{{{U.S.} Department of Energy}}.}
  \bibinfo{year}{2019}\natexlab{}.
\newblock \bibinfo{title}{Residential Renewable Energy Tax Credit}.
\newblock   (\bibinfo{year}{2019}).
\newblock
\showURL{%
\url{https://www.energy.gov/savings/residential-renewable-energy-tax-credit}}


\bibitem[\protect\citeauthoryear{Von~Neumann and Morgenstern}{Von~Neumann and
  Morgenstern}{1945}]%
        {von1945theory}
\bibfield{author}{\bibinfo{person}{John Von~Neumann} {and}
  \bibinfo{person}{Oskar Morgenstern}.} \bibinfo{year}{1945}\natexlab{}.
\newblock \bibinfo{booktitle}{{\em Theory of games and economic behavior}}.
\newblock \bibinfo{publisher}{Princeton University Press Princeton, NJ}.
\newblock


\bibitem[\protect\citeauthoryear{Wang and Shroff}{Wang and Shroff}{2017}]%
        {wang2017security}
\bibfield{author}{\bibinfo{person}{Sinong Wang} {and} \bibinfo{person}{Ness
  Shroff}.} \bibinfo{year}{2017}\natexlab{}.
\newblock \showarticletitle{Security Game with Non-additive Utilities and
  Multiple Attacker Resources}.
\newblock \bibinfo{journal}{{\em arXiv preprint arXiv:1701.08644\/}}
  (\bibinfo{year}{2017}).
\newblock


\bibitem[\protect\citeauthoryear{Xu}{Xu}{2016}]%
        {xu2016mysteries}
\bibfield{author}{\bibinfo{person}{Haifeng Xu}.}
  \bibinfo{year}{2016}\natexlab{}.
\newblock \showarticletitle{The mysteries of security games: Equilibrium
  computation becomes combinatorial algorithm design}. In
  \bibinfo{booktitle}{{\em Proceedings of the 2016 ACM Conference on Economics
  and Computation}}. ACM, \bibinfo{pages}{497--514}.
\newblock


\bibitem[\protect\citeauthoryear{Xu, Fang, Jiang, Conitzer, Dughmi, and
  Tambe}{Xu et~al\mbox{.}}{2014}]%
        {xu2014solving}
\bibfield{author}{\bibinfo{person}{Haifeng Xu}, \bibinfo{person}{Fei Fang},
  \bibinfo{person}{Albert~Xin Jiang}, \bibinfo{person}{Vincent Conitzer},
  \bibinfo{person}{Shaddin Dughmi}, {and} \bibinfo{person}{Milind Tambe}.}
  \bibinfo{year}{2014}\natexlab{}.
\newblock \showarticletitle{Solving zero-sum security games in discretized
  spatio-temporal domains}. In \bibinfo{booktitle}{{\em Twenty-Eighth AAAI
  Conference on Artificial Intelligence}}.
\newblock


\end{thebibliography}
\clearpage
\appendix
\section{Stackelberg, Nash, Minimiax Equilibria in General-Sum Games}
\label{app:example}
In general-sum games, Stackelberg, Nash, and Minimiax equilibria diverge.
Let us demonstrate this by the game matrix in Figure~\ref{tab:general-sum}. In the Stackelberg equilibrium, the row  player (leader) commits to mixed strategy $(\frac 12, \frac 12)$ and the column player (follower) responds by $(0, 1)$, i.e., playing action $R$ deterministically. In this case, the leader gets utility $7.5$.
In the Nash equilibrium, however, both players move simultaneously and  play strategy $(1, 0)$. In this case, the row player gets utility $1$. 
In the Minimax setting, each player maximizes its own utility assuming the worst-case.\footnote{This is equivalent to each player playing their side of a Nash equilibrium when they perceive the game as a zero-sum game.} In this case, the row player plays $(1, 0)$ and the column player plays $(\frac 12, \frac 12)$, with the row player gaining utility $5.5$.

\begin{table}\
\centering
	\begin{tabular}{r|c|c|}
			& L & R\\
		\hline U & 1, 1 &  10, 0\\ 
		\hline D & 0, 0 &  5,   1\\
		\hline
	\end{tabular}
	\qquad
		\begin{tabular}{r c c }
	    & Strategy & Row player's payoff  \\ 	\hline	
   Stackelberg: & $(\frac 12, \frac 12) \times (0,1)$ & 7.5 \\
	   Nash:    &   $(1, 0) \times (1, 0)$&   1  \\
	   Minimax: & $(1,0) \times (\frac 12, \frac 12)$ & 5.5 \\
	\end{tabular}
   	\caption{{\small A general-sum game matrix where Nash, Minimax, and Stackelberg equilibria are different.}}\label{tab:general-sum}
\end{table}

\section{A $1/12$ Approximate Solution for The \mgame\ Game}\label{sec:mgame-approx}
In Section \ref{section:apx-hard}, we showed that approximating a Stackelberg Equilibrium of the \mgame\  game within an arbitrarily small error is NP-hard. We complement this result by presenting a polynomial time algorithm for computing a $1/12$-approximately optimal solution. In other words, our algorithm provides a strategy $X$ for the leader that achieves at least a $1/12$ fraction of the payoff of a Stackelberg equilibrium  against the best response of his opponent.

We first devise a greedy algorithm for finding two matchings $x$ and $x'$ such that $x$ shares as many edges as possible with the $\pi$-transformation of $x'$, i.e.,  maximizing $|x\cap \pi(x')|$. To this end, we begin with two empty matchings $x$ and $x'$ and iteratively choose two edges $e$ and $e'$ of the graph such that $\pi(e') = e$, $e$ is not adjacent\footnote{two edges are adjacent if they share a common vertex.} to any edge of $x$, and $e'$ is not adjacent to any edge of $x'$. We then update the matchings by adding $e$ to $x$ and $e'$ to $x'$.

\begin{algorithm}

    \KwData{A graph $G$ and a permutation $\pi$ over the edges of $G$.}
    \KwResult{Two matchings $x$ and $x'$.}
    $x \leftarrow \emptyset$,   $x' \leftarrow \emptyset$\;
    \lWhile{\text{there exist edges $e$ and $e'$ s.t.} \begin{itemize}
     \setlength\itemsep{0pt}
                \item $e = \pi(e')$
                \item $e$ does not share any vertices with $x$ and $e'$ does not share any vertices with $x'$
            \end{itemize}}{
            \text{Add $e$ to $x$ and $e'$ to $x'$}
    }
    \textbf{Return} $(x,x')$\;
    \caption{Maximizing the number of edges shared between $x$ and $\pi(x')$}    \label{alg:matchings}
\end{algorithm}

We show that  $\payoffl(x, x')$ for $x, x'$ computed by  Algorithm~\ref{alg:matchings} is at least $1/4$ of the leader's payoff in a Stackelberg equilibrium.

\begin{observation}\label{observation:matchings}
    Let $x$ and $x'$ be the matchings determined by Algorithm \ref{alg:matchings}. $|x \cap \pi(x')|$ is at least $1/4$ fraction of the optimal payoff the leader gets in any Stackelberg Equilibria.
\end{observation}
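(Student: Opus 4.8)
The plan is to bound the leader's Stackelberg payoff by the largest leader payoff achievable over \emph{all} pure strategy pairs, and then show that the greedy selection of Algorithm~\ref{alg:matchings} recovers at least a quarter of that quantity via a maximality argument. I would begin by recording two elementary facts about the output $(x,x')$. Each iteration adds one edge $e$ to $x$ and the edge $e' = \pi^{-1}(e)$ to $x'$, so after $k$ iterations $|x| = |x'| = k$; moreover $\pi(x') = x$ by construction, so $|x \cap \pi(x')| = k$. Hence the quantity to be lower bounded is exactly the number of iterations $k$.

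Next I would reduce the mixed Stackelberg strategy to a single pure witness. If $(X,y)$ is a Stackelberg equilibrium of value $\opt = \payoffl(X,y) = \E_{x \sim X}[\,|x \cap \pi(y)|\,]$, then some pure matching $M_\Le$ in the support of $X$ satisfies $|M_\Le \cap \pi(y)| \ge \opt$ (the maximum is at least the average). Setting $M_\Fo = y$ yields two matchings with $|M_\Le \cap \pi(M_\Fo)| \ge \opt$. Let $T = M_\Le \cap \pi(M_\Fo)$, so $|T| \ge \opt$; note that for every $e \in T$ we have $e \in M_\Le$ and $\pi^{-1}(e) \in M_\Fo$, and that $\{\pi^{-1}(e) : e \in T\}$ is a sub-matching of $M_\Fo$.

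The heart of the argument is the maximality of the greedy selection. When Algorithm~\ref{alg:matchings} halts, no admissible pair can be added, so for each $e \in T$ the pair $(e, \pi^{-1}(e))$ is \emph{blocked}: either $e$ shares a vertex with some edge of $x$, or $\pi^{-1}(e)$ shares a vertex with some edge of $x'$. I would assign each $e \in T$ to one such blocking edge. A fixed edge $g \in x$ has two endpoints, and since $T \subseteq M_\Le$ is a matching, at most two elements of $T$ can be adjacent to $g$; symmetrically, since $\{\pi^{-1}(e): e\in T\}$ is a matching and $\pi^{-1}$ is injective, at most two elements of $T$ can be assigned to a fixed edge $g' \in x'$. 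Summing over the $|x| + |x'| = 2k$ blocking edges gives $|T| \le 2|x| + 2|x'| = 4k$, whence $\opt \le |T| \le 4k = 4\,|x \cap \pi(x')|$, which is exactly the claimed bound.

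The main obstacle is pinning down the blocking constant: I must argue carefully that each greedy edge is adjacent to at most two edges of the relevant matching (one per endpoint) and that the $x$-side and $x'$-side witnesses \emph{together} cover all of $T$, so that the factor comes out to exactly $4$ rather than something larger. By contrast, the reduction of the mixed leader strategy to a single pure witness matching, and the observation that $\opt \le \max_{M_\Le, M_\Fo} |M_\Le \cap \pi(M_\Fo)|$, are routine.
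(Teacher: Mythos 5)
Your proof is correct and follows essentially the same route as the paper: bound the Stackelberg value by the best pure pair $\max_{y,y'}|y\cap\pi(y')|$, then use maximality of the greedy output to show every candidate pair is blocked, with a counting argument giving the factor $4$. The only cosmetic difference is bookkeeping --- you charge each blocked candidate in $T$ to one of the $2k$ edges of $x\cup x'$ (at most $2$ charges per edge), while the paper groups at most $4$ blocked candidates per greedy pair $(\pi(e'),e')$ --- and you make explicit the averaging step from the mixed strategy $X$ to a pure witness, which the paper leaves implicit.
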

\begin{proof}
    Recall that the payoff of the leader is the number of edges his matching has in common with the $\pi$-transformation of the follower's matching. This value is always upper bounded by $\max_{y,y'} |y \cap \pi(y')|$. 
    Assume to the contrary that a pair of matchings $y, y'$ exists such that $|y \cap \pi(y')| >  4|x \cap\pi(x')|$.
Note that by the choices of Algorithm~\ref{alg:matchings}, we have $|x| = |x'| = |x \cap\pi(x')|$. Moreover, there are at least $4|x \cap \pi(x')|$ edges in $y'$ whose $\pi$-transformations appear in $y$. Notice that for every edge $e'$ in $x'$, there are at most four such edges in $y'$ that either share an end-point with $e'$ or their $\pi$-transfrmations share an endpoint with $\pi(e')$. Thus, there exists an edge $e'$ in $y$ such that neither $e'$ shares a vertex with the vertices of $x'$ nor $\pi(e')$ shares an endpoint with the vertices of $x$. This is a contradiction, as these edges could have been added to $x$ and $x'$ by Algorithm~\ref{alg:matchings}. Therefore, $\max_{y,y'} |y \cap \pi(y')| \leq 4|x \cap\pi(x')|$. This completes the proof.
\end{proof}
Let $\epsilon > 0$ be an arbitrarily small positive number and $X$ be a strategy of the leader that plays matching $x$ with probability $1/3-\epsilon$ and matching $x'$ with probability $2/3+\epsilon$. Moreover, let $y$ be a best response of the follower against $X$. Based on Observation \ref{observation:matchings}, we show that $\payoffl(X,y)$ is at least $(1-3\epsilon)/12$ fractional of the optimal payoff the leader can get in any SE.

\begin{theorem}
    Let $X$ and $y$ be the strategies of the leader and the follower defined above. $\payoffl(X,y)$ is at least $(1-3\epsilon)/12$ fraction of the optimal payoff the leader achieves in any Stackelberg equilibrium.
\end{theorem}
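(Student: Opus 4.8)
The plan is to lower bound the leader's payoff $\payoffl(X,y)$ directly in terms of $m := |x \cap \pi(x')|$ and then invoke Observation~\ref{observation:matchings}, which gives $m \ge \frac14\opt$ where $\opt$ is the optimal leader payoff in a Stackelberg equilibrium. Write $p = \frac13 - \epsilon$ and $q = \frac23 + \epsilon$ for the two probabilities. For any follower matching $y$,
\[
\payoffl(X,y) = p\,|x \cap \pi(y)| + q\,|x' \cap \pi(y)| \ge p\,|x \cap \pi(y)|,
\]
so it suffices to show $|x \cap \pi(y)| \ge m$ when $y$ is the follower's best response. By construction, Algorithm~\ref{alg:matchings} adds at each step an edge $e = \pi(e')$ to $x$ together with $e'$ to $x'$; hence $x \subseteq \pi(x')$ and $m = |x| = |x'|$. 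Consequently, if the best response $y$ contains all of $x'$, then $\pi(x') \subseteq \pi(y)$ and therefore $|x \cap \pi(y)| \ge |x \cap \pi(x')| = m$. Thus the whole statement reduces to the claim that $x' \subseteq y$.

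The heart of the argument, and the step I expect to be the main obstacle, is proving that \emph{every} best response $y$ contains $x'$ entirely. A best response maximizes $\payofff(X,y) = p\,|x \cap y| + q\,|x' \cap y|$, i.e.\ it is a maximum-weight matching for the edge weights $w_e = p\cdot 1_{e \in x} + q\cdot 1_{e \in x'}$ (so $w_e = p+q$ on $x \cap x'$, $q$ on $x' \setminus x$, $p$ on $x \setminus x'$, and $0$ elsewhere). The crucial inequality is $q > 2p$, which holds precisely because $q - 2p = 3\epsilon > 0$; this is the reason for the asymmetric $\tfrac13,\tfrac23$ split and for taking $\epsilon > 0$. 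I would argue by a local exchange: if some $e' \in x' \setminus y$, then because $x'$ is itself a matching, none of the (at most two) edges of $y$ meeting an endpoint of $e'$ can lie in $x'$, so each of them has weight at most $p$. Swapping these edges out and $e'$ in yields a matching whose total weight increases by at least $q - 2p = 3\epsilon > 0$, contradicting maximality. Hence $x' \subseteq y$.

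Finally I would assemble the pieces. Given $x' \subseteq y$ we get $|x \cap \pi(y)| \ge |x \cap \pi(x')| = m$, and therefore
\[
\payoffl(X,y) \ge p\,|x \cap \pi(y)| \ge p\,m = \Big(\tfrac13 - \epsilon\Big) m = \tfrac{1-3\epsilon}{3}\, m.
\]
Combining with $m \ge \frac14\opt$ from Observation~\ref{observation:matchings} yields $\payoffl(X,y) \ge \frac{1-3\epsilon}{12}\,\opt$, as claimed. The only care needed in the write-up is the exchange step in a multigraph (an endpoint of $e'$ may even be met by a parallel edge, but at most two edges of $y$ are removed in total and none belongs to $x'$), and the remark that the strict inequality $q > 2p$ makes the conclusion hold for every best response, so that no tie-breaking assumption on the follower is required.
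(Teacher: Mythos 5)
Your proposal is correct and follows essentially the same route as the paper's proof: show that every best response $y$ must contain all of $x'$ via a local exchange (gain $\frac23+\epsilon$ versus loss at most $2(\frac13-\epsilon)$, i.e.\ your $q>2p$), then lower-bound the leader's payoff by $(\frac13-\epsilon)\,|x\cap\pi(x')|$ and invoke Observation~\ref{observation:matchings}. Your write-up is in fact slightly more careful than the paper's on two points it leaves implicit --- the multigraph/parallel-edge case in the exchange step, and the observation that strictness of $q-2p=3\epsilon>0$ makes the claim hold for \emph{every} best response without any tie-breaking assumption --- but the underlying argument is identical.
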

\begin{proof}
Without loss of generality, $y$ is a pure strategy. We first show that $y$ contains all edges of $x'$. Assume on the  contrary that an edge $e'$ of $x'$ does not appear in $y$. If we add this edge to $y$ and remove the edges of $y$ that share an endpoint with $e'$ then the payoff of the follower changes as follows: Because $X$ plays $x'$ with probability $2/3+\epsilon$ then  $e'$ appears in his strategy with probability $2/3+\epsilon$ and thus the payoff of the follower is increased by $2/3+\epsilon$. Moreover, none of the crossing edges of $e'$ with edges of $y$ are in $x'$ and thus appear in the leader's strategy with probability at most $1/3-\epsilon$. This incurs a loss of at most $2(1/3-\epsilon) < 2/3+\epsilon$. Thus, this change improves the payoff of the follower which shows $y$ is not a best response against $X$ which is a contradiction.
    
    Note that $X$ plays $x$ with probability $1/3-\epsilon$ and $y$ contains every edge of $x'$. Let $U$ be the maximum payoff of the leader in any Stackelberg equilibrium. We have
    \begin{align*}
    \payoffl(X,y) &\geq (1/3-\epsilon)\payoffl(x,x') = (1/3-\epsilon) | x \cap \pi(x')| \\
    &= \frac 13  (1-3\epsilon)|x \cap \pi(x')| \geq \frac {1}{12} (1-3\epsilon) U,
    \end{align*}
 where the last step holds by  Observation \ref{observation:matchings}.
\end{proof}
\end{document}